\newtheorem{theorem}{Theorem}
\newtheorem{lemma}{Lemma}
\newtheorem{corollary}{Corollary}
\newtheorem{claim}{Claim}
\newcommand{\EDNR}{\textsc{EDNR}\xspace}
\newcommand{\suproot}{r}
\newcommand{\resis}[1]{r_{#1}}
\newcommand{\dem}[1]{d_{#1}}
\newcommand{\des}[2]{\ensuremath\mathrm{Des}_{#1}(#2)}
\newcommand{\losstotal}[1]{L(#1)}
\newcommand{\losssub}[2]{L(#1,#2)}
\newcommand{\sizeA}{m}
\newcommand{\sizeB}{n}
\newcommand{\Vdist}[1]{V_{#1}}
\newcommand{\Edist}[1]{E_{#1}}
\newcommand{\Tminmin}{\Tilde{T}}
\newcommand{\Fminmin}{\Tilde{F}}
\newcommand{\kstar}{\beta}
\newcommand{\subtree}[1]{\Tminmin_{#1}}
\newcommand{\treesize}[2]{a_{#1}^{#2}}
\newcommand{\Topt}{T^*}
\newcommand{\Fopt}{F^*}
\newcommand{\minmin}{\textsc{Min-Min}\xspace}
\newenvironment{listing}[1]{%
        \begin{list}{*}{%
                 \settowidth{\labelwidth}{#1}%
                 \setlength{\leftmargin}{\labelwidth}%
                 \advance \leftmargin by 12pt
                   \setlength{\itemsep}{0pt}%
                   \setlength{\parsep}{0pt}%
                   \setlength{\topsep}{0.5em}%
                   \setlength{\parskip}{0pt}%
}%
}{%
\end{list}}
\title{Loss Minimization for Electrical Flows over\\Spanning Trees on Grids
\thanks{
This work is partially supported by JSPS KAKENHI Grant Numbers 
JP24H00686, JP24H00690, 
JP20H05795, 
JP22H05001, 
JP24K02901, 
JP20K11670, JP23K10982, 
JP23K21646, 
Japan, and JST ERATO Grant Number JPMJER2301, Japan.
}
}
\author{
Takehiro Ito\thanks{Graduate School of Information Sciences, Tohoku University, Sendai, Japan} \and
Naonori Kakimura\thanks{Faculty of Science and Technology, Keio University, Yokohama, Japan} \and
Naoyuki Kamiyama\thanks{Institute of Mathematics for Industry, Kyushu University, Fukuoka, Japan} \and
Yusuke Kobayashi\thanks{Research Institute for Mathematical Sciences, Kyoto University, Kyoto, Japan} \and
Yoshio Okamoto\thanks{Graduate School of Informatics and Engineering, The University of Electro-Communications, Chofu, Japan} \\
takehiro@tohoku.ac.jp,
kakimura@math.keio.ac.jp,
kamiyama@imi.kyushu-u.ac.jp,\\
yusuke@kurims.kyoto-u.ac.jp,
okamotoy@uec.ac.jp
}
\date{}
\begin{document}

\maketitle

\begin{abstract}
We study the electrical distribution network reconfiguration problem, defined as follows.
We are given an undirected graph with a root vertex, demand at each non-root vertex, and resistance on each edge.
Then, we want to find a spanning tree of the graph that specifies the routing of power from the root to each vertex so that all the demands are satisfied and the energy loss is minimized.
This problem is known to be NP-hard in general. 
When restricted to grids with uniform resistance and the root located at a corner, 
Gupta, Khodabaksh, Mortagy and Nikolova~[Mathematical Programming 2022]
invented the so-called \minmin algorithm whose approximation factor is theoretically guaranteed.
Our contributions are twofold.
First, we prove that the problem is NP-hard even for grids;
this resolves the open problem posed by Gupta et al.
Second, we give a refined analysis of the \minmin algorithm and improve its approximation factor under the same setup. 
In the analysis, we formulate the problem of giving an upper bound for the approximation factor as a non-linear optimization problem that maximizes a convex function over a polytope, which is less commonly employed in the analysis of approximation algorithms than linear optimization problems.
\end{abstract}

\section{Introduction}

In this work, we study the electrical distribution network reconfiguration~(\EDNR for short) problem from a theoretical-computer-scientific viewpoint.
This problem was first introduced by Merlin and Back~\cite{MerlinBack} and has been studied mainly in the power delivery literature.
In the most elementary form, which we adopt in this paper, the \EDNR problem takes a power distribution network as an input.
A power distribution network is modeled by an undirected graph with a root vertex, demand at each non-root vertex, and resistance on each edge.
Then, we want to find a spanning tree of the graph that specifies the routing of power from the root to each vertex so that all the demands are satisfied and the energy loss is minimized.

\begin{figure}[tb]
    \centering
    \includegraphics[width=0.35\linewidth]{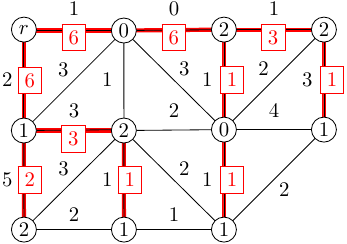}
    \caption{An instance of the \EDNR problem. The root is specified with $r$. Each non-root vertex is associated with its demand. Each edge is associated with its resistance. Red edges show a spanning tree $T$, and a red boxed number shows $\sum_{v \in \des{T}{e}}\dem{v}$ for each edge $e$ of the tree $T$.}
    \label{fig:example1}
\end{figure}

More specifically, in the \EDNR problem,
we are given a connected undirected graph $G=(V,E)$
with a fixed root $\suproot \in V$,
the demand $\dem{v} \geq 0$ for each vertex $v \in V \setminus \{\suproot \}$,
and the resistance $\resis{e} \geq 0$ for each edge $e \in E$.
Then, we want a spanning tree $T=(V,F)$ of $G$ that minimizes the following quantity, called the \emph{loss} of $T$:
\begin{equation*}
    \losstotal{T} :=
    \sum_{e \in F} \resis{e} \left(\sum_{v \in \des{T}{e}} \dem{v} \right)^2, \label{eq:obj}
\end{equation*}
where $\des{T}{e}$ is the set of vertices of the connected component~(tree) in $T-e$ that does not contain $\suproot$.
Figure~\ref{fig:example1} shows an example.

\subsection{Related work}
Most of the existing work concentrated on efficient heuristic methods to find an approximate solution~(e.g.,~\cite{CGYL193906,HCDR9423588,JKKSL1046886} and many others), but they did not give any theoretical guarantee for approximation factors.
Baran and Wu~\cite{BaranWu} gave a problem formulation in terms of combinatorial non-linear programming, but their solution method is not known to achieve optimality or any approximation guarantee.
Jabr, Singh and Pal~\cite{JSP6140618} gave a problem formulation in terms of mixed-integer convex programming and mixed-integer linear programming.
They are not polynomial-time algorithms, but, in theory, they could provide solutions with optimality guarantees. However, in their experiments, unfortunately, no optimality was guaranteed for large instances due to reasons such as time limit.
Inoue, Takano, Watanabe, Kawahara, Yoshinaka, Kishimoto, Tsuda, Minato and Hayashi~\cite{InoueTWKYKTMH14} gave an algorithm based on a binary decision diagram (BDD) to solve the \EDNR problem in which a given graph may have multiple roots. 
Their algorithm does not run in polynomial time even for a single root case, but its error bound is guaranteed theoretically.
We refer to review articles~\cite{BADRAN2017854,SULTANA2016297,THAKAR2019100191} for further background.
Note that the formulations of the \EDNR problem may vary among the literature by the choice of various realistic constraints to consider, but the essence of the problem remains the same.

Recently, more rigorous approaches have been carried out from the viewpoint of theoretical computer science.
Khodabaksh, Yang, Basu, Nikolova, Caramanis, Lianeas and Pountourakis~\cite{DBLP:conf/hicss/KhodabakhshYBNC18} proved that the \EDNR problem is NP-hard.
They also reduced the \EDNR problem to the submodular function maximization problem under matroid constraints.
It should be noted that their reduction is not approximation-factor-preserving, and thus does not yield a constant-factor approximation algorithm.
Approximation algorithms with theoretical guarantees were provided by Gupta, Khodabaksh, Mortagy and Nikolova~\cite{DBLP:journals/mp/GuptaKMN22}.
They proved that a shortest-path tree with respect to resistance is an $n$-approximate solution, where $n$ is the number of vertices.
For grids with uniform demand, uniform resistance, and the root located at a corner, they invented the so-called \minmin algorithm, and proved that the algorithm provides a $(2+o(1))$-approximate solution in polynomial time.
The \minmin algorithm also provides an approximate solution even when the condition of uniform demand is relaxed:
when all demands are within the range $[\dem{\min}, \dem{\max}]$, the algorithm provides an $\alpha^2 (2+o(1))$-approximate solution in polynomial time, where $\alpha = \dem{\max}/\dem{\min}$.
They raised an open problem that asked the NP-hardness of the \EDNR problem for grids or planar graphs.\footnote{However, as we will mention in Section~\ref{sec:hardness}, Chiba~\cite{Chiba17} had already proved the NP-hardness of the \EDNR problem for planar graphs, which are not grids, in his Master's thesis (written in Japanese).}

\subsection{Our contributions}
In this paper, we focus on the \EDNR problem with grid topology.
A \emph{grid} is defined as the Cartesian product of two paths.
More specifically, a grid is an undirected graph isomorphic to the following graph $G=(V,E)$ for some integers $n \geq 1$ and $m \geq 1$:
\begin{align*}
    V &= \{ (i,j) \mid 0 \leq i \leq n-1,\ 0 \leq j \leq m-1\},\\
    E &= \{ \{(i,j), (i',j')\} \mid |i-i'|+|j-j'| = 1\}.
\end{align*}
In that case, $G$ is called an $n \times m$-grid, and the \emph{height} of $G$ is $n$.

Our contributions are twofold.
First, we prove that the \EDNR problem is NP-hard even for grids.
This resolves the open problem mentioned above by Gupta et al.~\cite{DBLP:journals/mp/GuptaKMN22}.
We give two reductions, one based on the partition problem and the other based on the $3$-partition problem.
The former proves the NP-hardness for grids of height three, and the latter proves the NP-hardness for grids when each demand is zero or one.


Second, we revisit the \minmin algorithm by Gupta et al.~\cite{DBLP:journals/mp/GuptaKMN22}.
The algorithm focuses on the case where the underlying graph is a grid, demands and resistances are uniform and the root is located at a corner of the grid.
We give a refined analysis for the \minmin algorithm, and prove that it indeed gives a $(9/8 + o(1))$-approximate solution for the aforementioned case.
This improves the approximation factor of $2+o(1)$ given by Gupta et al.~\cite{DBLP:journals/mp/GuptaKMN22}, and also yields the improvement of the approximation factor for the non-uniform demand case to $\alpha^2 (9/8+o(1))$, where $\alpha = \dem{\max}/\dem{\min}$.

We note that the complexity status of the \EDNR problem remains open for the case where the \minmin algorithm applies, namely, for grids with uniform demand and uniform resistance, and the root located at a corner.
We think the problem is still non-trivial even for such a restricted case. 
To gain intuition, we performed a numerical optimization with CPLEX to find the optimal loss for the $n \times n$-grids, and compared them with the loss obtained by the \minmin algorithm. 
The result is summarized in \tablename~\ref{table:opt} and \figurename~\ref{fig:grid7x7}.
CPLEX was only able to find an optimal solution up to $n \leq 8$ within an hour.\footnote{%
This is not a serious computational experiment. We are content with finding (1) the \minmin algorithm does not give an optimum when $n \geq 7$ and (2) the sequence of optimal losses is not listed in the OEIS.\@
}
We checked that the sequence of optimal losses is not listed in the Online Encyclopedia of Integer Sequences (OEIS).
This suggests that it is non-trivial to find an optimal value even for this case.

\begin{table}[t]
    \centering
    \caption{The optimal loads and the loads obtained by the \minmin algorithm for the $n\times n$-grid where demands and resistances are uniform and the root is located at a corner. Up to $n \leq 6$, the \minmin algorithm gives an optimal solution, but for larger $n$, the \minmin algorithm does not give an optimal solution.}
    \begin{tabular}{rrr}
    \hline
    \multicolumn{1}{l}{$n$} & \multicolumn{1}{l}{Optimal loss}& \multicolumn{1}{l}{Loss by \minmin}\\
    \hline
    $2$ & $6$ & $6$\\
    $3$ & $52$ & $52$\\
    $4$ & $224$ & $224$\\
    $5$ & $660$ & $660$\\
    $6$ & $1\ 570$ & $1\ 570$\\
    $7$ & $3\ 242$ & $3\ 246$\\
    $8$ & $6\ 040$ & $6\ 068$\\
    \hline
    \end{tabular}
    \label{table:opt}
\end{table}

\begin{figure}[t]
\centering
\includegraphics{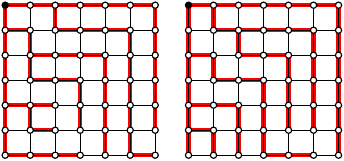}
\caption{The comparison of an optimal solution~(left) and the output of the \minmin algorithm~(right) for the $7\times 7$ grid with uniform demand and resistance. The optimal loss is $3{,}242$ while the loss from the \minmin algorithm is $3{,}246$.}
\label{fig:grid7x7}
\end{figure}

As a technical highlight, we here outline our refined analysis for the \minmin algorithm, as follows.
The \minmin algorithm constructs a spanning tree $\Tminmin$ of a given grid such that 
a similar amount of power flows on edges that are equidistant from the root in $\Tminmin$. 
We formally state and prove this property~(see Lemma~\ref{lem:balanced}) and subsequently utilize it 
to derive an upper bound on the loss $\losstotal{\tilde T}$. 
A distinctive aspect of our analysis is 
the derivation of this upper bound via a non-linear optimization problem 
that maximizes a convex function over a polytope, as formulated in \eqref{op:Bauer} in Lemma~\ref{lem:Bauer}. 
It is worth mentioning that 
convex function maximization problems are less commonly employed in the analysis of approximation algorithms, 
while there are a lot of studies using linear programs.
Additionally, evaluating the optimal value of a convex function maximization problem is generally challenging. 
Nevertheless, we successfully compute an upper bound on \eqref{op:Bauer} using convexity. 

\section{Hardness for Grids} \label{sec:hardness}

In this section, we prove that the \EDNR problem is NP-hard even for grids.
We give two incomparable NP-hardness results for grids: 
in Theorem~\ref{thm:hard_constheight}, the height of grids is fixed to three, but demands are not fixed; 
in Theorem~\ref{thm:hard_binarydemand}, demands are fixed to zero or one, but the height of grids is not fixed.  

Chiba~\cite{Chiba17} proved in his Master's thesis (written in Japanese) that the \EDNR problem is NP-hard even for planar bipartite graphs of treewidth $2$ with uniform resistance.
His reduction was based on the partition problem, although the resulting graphs were not grids.
Our reduction for Theorem~\ref{thm:hard_constheight} is also based on the partition problem and resembles his approach, but the resulting graphs in our case are grids.

We first show that the problem is NP-hard even for grids of height three. 
\begin{theorem}
\label{thm:hard_constheight}
The \EDNR problem is NP-hard on grids of height three even when $\resis{e} \in \{0,1\}$ for all edges $e$.
\end{theorem}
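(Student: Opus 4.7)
My plan is to reduce from \textsc{Partition}: given positive integers $a_1,\ldots,a_k$ with $\sum_{j=1}^k a_j = 2S$, decide whether some subset sums to $S$. I would build a grid of height $3$ and width $k+1$ with vertices $(i,j)$ for $i\in\{0,1,2\}$ and $j\in\{0,1,\ldots,k\}$, place the root $\suproot$ at the top-left corner $(0,0)$, assign resistance $1$ to exactly the two edges incident to the root, $\{(0,0),(0,1)\}$ and $\{(0,0),(1,0)\}$, assign resistance $0$ to every other edge, and set demand $a_j$ at $(1,j)$ for $j=1,\ldots,k$ with demand $0$ elsewhere. The target loss value will be $2S^2$.

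Since only the two root-incident edges have positive resistance, the loss of any spanning tree $T$ equals the sum of squared flows through whichever of those two edges belongs to $T$. The root has exactly two grid neighbors, so $T$ contains either one or both of these edges. If only one, then all $2S$ units of demand flow through it and the loss is $4S^2$. If both, then $T$ splits at the root into two subtrees carrying total demands $\phi$ and $2S-\phi$, so the loss is $\phi^2+(2S-\phi)^2\ge 2S^2$ with equality iff $\phi=S$. Each item vertex $(1,j)$ lies in exactly one of these two subtrees, so $\phi$ is necessarily a subset sum $\sum_{j\in A}a_j$. Hence the minimum loss equals $2S^2$ iff the \textsc{Partition} instance is YES.

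The one nontrivial step is to show that every subset $A\subseteq\{1,\ldots,k\}$ is realizable as the set of items in the subtree through $\{(0,0),(1,0)\}$. I would exhibit this explicitly: take both root edges, the vertical $\{(1,0),(2,0)\}$, all top-row horizontals to the right of the root (i.e.\ $\{(0,j),(0,j+1)\}$ for $j=1,\ldots,k-1$), all bottom-row horizontals $\{(2,j),(2,j+1)\}$ for $j=0,\ldots,k-1$, and for each $j\in\{1,\ldots,k\}$ the vertical $\{(1,j),(2,j)\}$ if $j\in A$ and $\{(0,j),(1,j)\}$ otherwise. A direct count gives $3k+2$ edges on $3(k+1)$ vertices, and tracing from the root confirms connectivity, so we obtain a spanning tree in which $(1,j)$ lies in the lower subtree precisely when $j\in A$, giving $\phi=\sum_{j\in A}a_j$. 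Since \textsc{Partition} is NP-hard and the reduction is polynomial, the theorem follows. The main conceptual obstacle is arranging the resistances and demands so that (i) the loss of every spanning tree collapses to the single quadratic expression $\phi^2+(2S-\phi)^2$, which requires localizing the high-resistance edges at the root, and (ii) every subset is still realizable despite the narrow height-three topology; both are handled by the construction above.
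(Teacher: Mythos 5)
Your proposal is correct and follows essentially the same reduction as the paper: the same $3\times(k+1)$-grid from \textsc{Partition}, the same placement of demands on the middle row, the same two unit-resistance edges at the corner root, the same threshold $2S^2$, and the same explicit spanning tree realizing a given subset. Your explicit handling of the degenerate case where the tree uses only one root edge is a minor extra care the paper glosses over, but it does not change the argument.
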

\begin{proof}
We reduce the following partition problem, which is well-known to be NP-complete~\cite{DBLP:books/fm/GareyJ79}.
In the problem \textsc{Partition}, we are given a positive integer $\sizeA$, and a list of $\sizeA$ positive integers $[a_1, a_2, \dots, a_{\sizeA}]$.
Then, we are asked to determine whether there exists a set $I \subseteq \{1,2,\dots,\sizeA\}$ such that
    \[
        \sum_{k \in I}a_k = \sum_{k \not\in I}a_k.
    \]

    The reduction proceeds as follows.
    We are given an instance $(\sizeA, [a_1,a_2,\dots,a_{\sizeA}])$ of \textsc{Partition}.
    Consider the $3\times (\sizeA+1)$-grid. 
    See \figurename~\ref{fig:hardness1}.
    The root $\suproot$ is located at $(0,0)$.
    The demand $\dem{i,j}$ of each non-root vertex $(i,j)$ is specified by
    \[
        \dem{i,j} = 
        \begin{cases}
            a_j & \text{if } i = 1 \text{ and } 1 \leq j \leq \sizeA,\\
            0 & \text{otherwise.}
        \end{cases}
    \]
    The resistance of an edge $e$ is specified by $1$ if it is incident to the root $\suproot$, and by $0$ otherwise.
    This completes the description of our reduction.

\begin{figure}[t]
\centering
\includegraphics{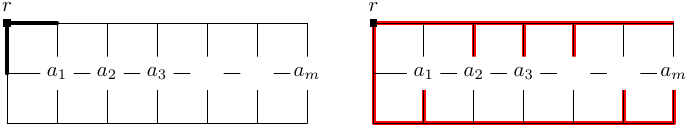}
\caption{Proof of Theorem \ref{thm:hard_constheight}. Thick edges have resistance $1$ and thin edges have resistance $0$.}
\label{fig:hardness1}
\end{figure}

    Now, we claim that the instance of \textsc{Partition} has a solution~(i.e., the answer to the question is yes) if and only if there exists a spanning tree $T$ for the constructed instance of the \EDNR problem such that 
    \[
    \losstotal{T} \leq \frac{1}{2} \left(\sum_{k = 1}^{\sizeA} a_k\right)^2.
    \]

    Suppose that there exists a set $I\subseteq \{1,2,\dots, \sizeA\}$ such that $\sum_{k\in I}a_k = \sum_{k \not\in I}a_k$.
    Then, we can construct the following spanning tree $T=(V,F)$ with
    \begin{align*}
    F &= \{\{(0,j),(0,j+1)\} \mid 0 \leq j \leq \sizeA-1\}
    \cup \{\{(2,j),(2,j+1)\} \mid 0 \leq j \leq \sizeA-1\} \\
    &\quad {} \cup \{\{(i,0), (i+1,0)\} \mid 0\leq i \leq 1\}
    \cup \{\{(0,k), (1,k)\} \mid 1\leq k \leq \sizeA, k \in I\}\\
    &\quad {} \cup \{\{(1,k), (2,k)\} \mid 1\leq k \leq \sizeA, k \not\in I\}.
    \end{align*}
    See \figurename~\ref{fig:hardness1}~(right).
    Note that the tree $T$ uses the two edges of resistance one.
    For convenience, denote 
    $e = \{(0,0), (0,1)\}$~(the horizontal edge incident to the root) and
    $f = \{(0,0), (1,0)\}$~(the vertical edge incident to the root).
    Then, we can compute the loss of $T$ as 
    \begin{align*}
        \losstotal{T}
        &= \left( \sum_{v \in \des{T}{e}} \dem{v} \right)^2 + \left( \sum_{v \in \des{T}{f}} \dem{v} \right)^2
        = \left( \sum_{k \in I} a_k \right)^2 + \left( \sum_{k \not\in I} a_k \right)^2
        = 2 \cdot \left( \frac{1}{2}\sum_{k=1}^{\sizeA} a_k \right)^2\\
        &= \frac{1}{2} \left(\sum_{k=1}^{\sizeA} a_k \right)^2.
    \end{align*}
    Thus there exists a spanning tree $T$ whose loss is bounded by $\left(\sum_{k=1}^{\sizeA} a_k \right)^2/2$.

    Next, suppose that there exists a spanning tree $T$ of loss $\losstotal{T}$ at most $\left(\sum_{k=1}^{\sizeA}a_k \right)^2/2$.
    Since the grid contains only two edges $e = \{(0,0), (0,1)\}$ and
    $f = \{(0,0), (1,0)\}$ with positive resistance, the loss is determined as
    \[
    \losstotal{T} = \left( \sum_{v \in \des{T}{e}} \dem{v} \right)^2 + \left( \sum_{v \in \des{T}{f}} \dem{v} \right)^2.
    \]
    Let $I \subseteq \{1,2,\dots,\sizeA\}$ be defined as
    $I = \{ k \in \{1,2,\dots,\sizeA\} \mid (1,k) \in \des{T}{e} \}$.
    Then, we have
    \[
    \losstotal{T} =
    \left(\sum_{k \in I} a_k\right)^2 + 
    \left(\sum_{k \not\in I} a_k\right)^2.
    \]
    By our assumption that $\losstotal{T} \leq \left(\sum_{k=1}^{\sizeA}a_k \right)^2/2$,
    it follows that
    \[
    \left(\sum_{k \in I} a_k\right)^2 + 
    \left(\sum_{k \not\in I} a_k\right)^2
    \leq 
    \frac{1}{2}\left(\sum_{k=1}^{\sizeA}a_k\right)^2
    =
    \frac{1}{2}\left(\sum_{k \in I}a_k + \sum_{k \not\in I} a_k\right)^2
    ,
    \]
    and therefore
    $\left( \sum_{k\in I}a_k - \sum_{k \not\in I} a_k\right)^2 \leq 0$.
    This implies that
       $\sum_{k\in I}a_k = \sum_{k \not\in I} a_k$, 
    which means that $I$ is a 
    solution of the given instance of \textsc{Partition}.
\end{proof}
\bigskip

We next show that the \EDNR problem is NP-hard for grids even when each demand is zero or one.
\begin{theorem}
\label{thm:hard_binarydemand}
The \EDNR problem on grids is NP-hard even when $\dem{v} \in \{0,1\}$ for all non-root vertices $v \neq r$ and $\resis{e} \in \{0,1,\infty\}$ for all edges $e$.
\end{theorem}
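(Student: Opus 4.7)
The plan is to reduce from \textsc{3-Partition}, which is strongly NP-complete. An instance consists of $3m$ positive integers $a_1,\dots,a_{3m}$ with $\sum_{j} a_j = mB$ and $B/4 < a_j < B/2$ for every $j$; we must decide whether the indices can be split into $m$ triples each summing to $B$. Strong NP-hardness permits us to represent each $a_j$ in unary, so the \EDNR instance produced below can still be of size polynomial in the input.

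First I would carve out $m$ designated \emph{bin edges}, each of resistance $1$, placed near the root $\suproot$ so that every unit of demand is forced to cross exactly one of them on its way back to $\suproot$. All other edges have resistance $0$ or $\infty$: the $\infty$-edges slice the grid into $3m$ \emph{item gadgets}, where the $j$-th gadget holds $a_j$ demand-$1$ vertices joined by resistance-$0$ edges and is free to be routed as a block into one of the $m$ bins. The layout must guarantee two structural properties: (i) each item gadget admits a feasible route to every bin (so that any $3$-partition is realisable as a spanning tree), and (ii) no finite-loss spanning tree can split an item gadget across two different bins.

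For completeness, any valid $3$-partition $\{S_1,\dots,S_m\}$ yields a spanning tree in which bin edge $i$ carries $\sum_{j\in S_i} a_j = B$, so $\losstotal{T} = mB^2$. For soundness, suppose a spanning tree $T$ satisfies $\losstotal{T}\le mB^2$. Finiteness of the loss implies that every $\infty$-edge in $T$ carries zero flow; the $0$-edges contribute nothing, so $\losstotal{T} = \sum_{i=1}^m f_i^2$ where $f_i$ is the flow on bin edge $i$. Since the bin edges form a root-cut, $\sum_i f_i$ equals the total demand $mB$, and the Cauchy--Schwarz inequality gives
\[
\sum_{i=1}^m f_i^2 \;\ge\; \frac{\bigl(\sum_{i=1}^m f_i\bigr)^2}{m} \;=\; mB^2,
\]
with equality if and only if $f_1 = \cdots = f_m = B$. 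Combined with atomicity~(ii), each $f_i$ is a sum of distinct $a_j$'s equal to $B$, and since $B/4 < a_j < B/2$ exactly three items contribute to each bin; this is the desired $3$-partition.

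The main obstacle is achieving property~(ii) inside the rigid grid topology. Although $\infty$-edges readily forbid individual edges, it is not obvious how to arrange $3m$ item gadgets and $m$ bin edges on a rectangular grid so that every item has at least one admissible route to every bin yet no item can leak its flow into two different bins. This is the point where the construction must genuinely depart from the non-grid reduction of Chiba~\cite{Chiba17}; once a gadget with the two properties above is in place, the convexity template completes the reduction.
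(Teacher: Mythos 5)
Your reduction template---\textsc{3-Partition} in unary, item chains of demand-$1$ vertices, $m$ bottleneck edges of resistance $1$ forming a root-cut, and the Cauchy--Schwarz/convexity argument combined with $B/4<a_j<B/2$ to extract the triples---is exactly the strategy of the paper, and your completeness/soundness accounting is correct (indeed more explicit than the paper's, which defers the analogous computation to its partition reduction). The problem is that you have declared the one genuinely grid-specific step, namely realizing your properties (i) and (ii) inside a rectangular grid, to be an unresolved ``main obstacle.'' That step is the entire content of the theorem: hardness with these gadget properties on an unconstrained planar graph is comparatively easy (and is essentially Chiba's earlier result), so a proof that stops where yours stops has not established the statement about grids.

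For the record, the paper resolves it as follows. Take a $(3m(W+1)+W)\times(2+a_{\max}+2W)$-grid with the root at $(0,0)$ and a spacing parameter $W=\Theta(m^5)$. The $k$-th item is a horizontal chain of $a_k$ demand-$1$ vertices placed in row $k(W+1)$ starting at column $W+2$; every edge touching a chain vertex receives resistance $\infty$ except the chain's internal horizontal edges and the single horizontal edge entering its left end, which forces all $a_k$ units of demand to exit through one resistance-$0$ edge (your atomicity (ii)). The root-cut consists of the edges between column $0$ and column $1$: $m$ of them (one per group of three chain rows) have resistance $1$ and all the others have resistance $\infty$, so in any finite-loss tree every unit of demand crosses exactly one bin edge. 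All remaining edges have resistance $0$, and the gaps of width $W$ between consecutive chains leave enough zero-resistance free space to route each chain to any of the $m$ bins without interference, giving your property (i). Finally, $\infty$ is replaced by an integer of order $\left(\sum_{k} a_k\right)^3$, keeping the instance polynomially bounded. Without an explicit layout of this kind (or some other witness that (i) and (ii) are simultaneously achievable on a grid), the reduction is incomplete.
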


We note that $\resis{e} = \infty$ should be interpreted as $\resis{e}$ is a sufficiently large integer.
In the reduction, the value of $\resis{e}$ will be polynomially bounded.

\begin{figure}[!t]
\centering
\includegraphics[width=0.6\linewidth]{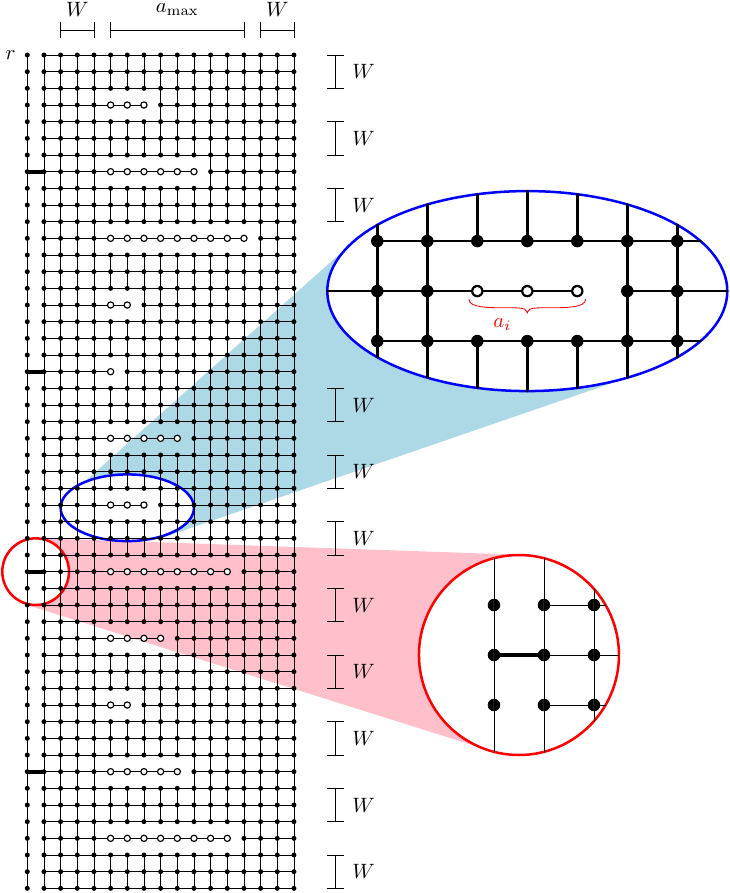}
\caption{Proof of Theorem \ref{thm:hard_binarydemand}. Thick edges have resistance $1$, thin edges have resistance $0$, and missing edges~(white edges, invisible edges) have resistance $\infty$. White vertices have demand $1$ and non-root black vertices have demand $0$. The blue window shows a close-up image of the chain with $a_i$ vertices of demand $1$. The red window shows a close-up image of one of the resistance-$1$ edges.}
\label{fig:hard2}
\end{figure}

\begin{proof}
We reduce the following $3$-partition problem, which is well-known to be strongly NP-complete~\cite{DBLP:books/fm/GareyJ79}.
In the problem \textsc{3-Partition}, we are given a positive integer $\sizeB$, and a list of $3\sizeB$ positive integers $[a_1, a_2, \dots, a_{3\sizeB}]$ with 
\[ 
    \frac{1}{4\sizeB} \sum_{k=1}^{3\sizeB}a_k < a_j < \frac{1}{2\sizeB} \sum_{k=1}^{3\sizeB} a_k 
\]
for all $j \in \{1,2,\dots,3\sizeB\}$.
Then, we are asked to determine whether there exists a partition $\{I_1, I_2, \dots, I_{\sizeB}\}$ of $\{1,2,\dots,3\sizeB\}$ into triples such that
    \[
        \sum_{k \in I_i} a_k = \frac{1}{\sizeB}\sum_{k=1}^{3\sizeB}a_k
        \quad \text{for all } i \in \{1,2,\dots,\sizeB\}. 
    \]

    The reduction proceeds as follows.
    We are given an instance $(\sizeB, [a_1,a_2,\dots,a_{3\sizeB}])$ of \textsc{3-Partition}.
    Let $a_{\max} = \max\{a_1, a_2, \dots, a_{3\sizeB}\}$ and $W$ be a sufficiently large integer that will be made more precise later.
    Consider the $(3 \sizeB (W+1)+W)\times (2+a_{\max}+2W)$-grid.
    The root $\suproot$ is located at $(0,0)$.
    See \figurename~\ref{fig:hard2} for illustration.

    Most of the non-root vertices have demand $0$.
    We now describe the vertices with demand $1$.
    For each $k \in \{1,2,\dots,3\sizeB\}$, let $i_k = k(W+1)$ and $j_k = W+2$.
    Then, we specify the demands of the vertices in the following set to $1$:
    \[
        \bigcup_{k=1}^{3\sizeB}\{(i_k,j) \mid j \in \{j_k, j_k+1, \dots, j_k + a_k-1\}\}.
    \]

    Most of the edges have resistance $0$.
    We now describe the edges with resistance $1$ and $\infty$.
    There are $\sizeB$ edges with resistance $1$.
    They are the edges of the form $\{(i,0), (i,1)\}$ for $i \in \{2(W+1), 5(W+1), \dots, (3k+2)(W+1), \dots, (3\sizeB-1)(W+1)\}$, with $k$ ranging from $0$ to $\sizeB-1$.
    There are two kinds of edges with resistance $\infty$. 
    The first kind consists of those edges of the form $\{(i,0), (i,1)\}$ that are not considered above; namely $i \not\in \{2(W+1), 5(W+1), \dots, (3k+2)(W+1), \dots, (3\sizeB-1)(W+1)\}$.
    The second kind consists of those edges that are incident to a vertex of demand $1$, with the exception that the edges of the form $\{(i,j-1), (i,j)\}$ with $i = i_k = k(W+1)$ and $j \in \{j_k, j_k+1, \dots, j_k + a_k-1\}$ have resistance $0$.
    
    This completes the description of our reduction.
    The reduction has the following properties.
    First, for each $k \in \{1,2,\dots,3\sizeB\}$, there is a chain of $a_k$ vertices with demand $1$ and that has a unique resistance-$0$ edge incident to the outside of the chain.
    Therefore, to satisfy the demands of those $a_k$ vertices, that resistance-$0$ edge must be used by an optimal spanning tree.
    Second, there are $\sizeB$ edges with resistance $1$ and any path from the root to a vertex with demand $1$ must go through one of those edges.
    Therefore, to achieve a low loss, the total demand must be divided into $\sizeB$ parts as equally as possible.
    We here omit a formal proof, because it can be obtained by similar arguments as in the proof of Theorem~\ref{thm:hard_constheight}. 
%
\begin{figure}[ht]
\centering
\includegraphics{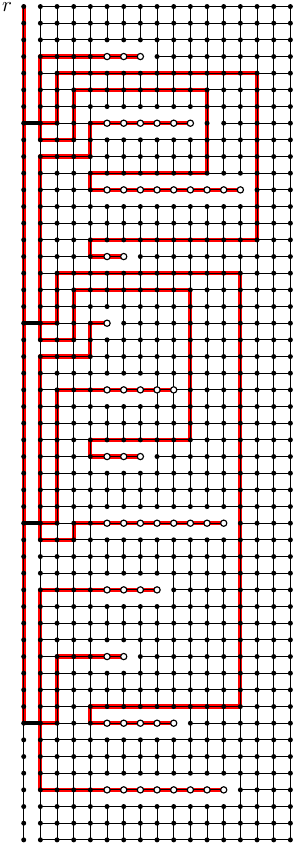}
\caption{Construction of an optimal spanning tree in the proof of Theorem \ref{thm:hard_binarydemand}, shown by red edges. In the figure, several black vertices are not incident with any red edges, but they can be connected to the existing red tree only via black edges~(i.e., edges of resistance $0$), which incurs no extra cost. }
\label{fig:hard3}
\end{figure}

    Then, there exists a spanning tree of loss at most
    $\sizeB\left(\frac{1}{\sizeB}\sum_{k=1}^{3\sizeB}a_k\right)^2$
    in the constructed instance
    if and only if the \textsc{$3$-partition} instance has a solution.
    See \figurename~\ref{fig:hard3}.
    Such a spanning tree can be constructed if $W$ is sufficiently large.
    For example, we may choose $W$ by $\Theta(\sizeB^5)$.
    The resistance $\infty$ should also be replaced with a sufficiently large integer; it is enough to use a large number in
    $\Theta\left( \left(\sum_{k=1}^{3\sizeB}a_k\right)^3\right)$
    instead of $\infty$.
    This completes the proof.
\end{proof}

\section{Improved Approximation} \label{sec:apx_minmin}

Gupta et al.~\cite[Section~5]{DBLP:journals/mp/GuptaKMN22} proposed a polynomial-time approximation algorithm, called the \textsc{Min-Min} algorithm, and obtained the approximation factor of $2+o(1)$ for the \EDNR problem on grids when all demands and resistances are uniform, and the root $\suproot$ is located at a corner of the grids. 
In this section, we give a refined analysis of the \textsc{Min-Min} algorithm under the same setup, as in the following theorem; thus our analysis provides an asymptotically better approximation factor.
\begin{theorem} \label{the:apx_uniform}
Consider the \EDNR problem on grids where $\dem{v} = 1$ for all non-root vertices $v \neq \suproot$,
$\resis{e} = 1$ for all edges $e$, and the root $\suproot$ is located at a corner of the grids.
Then, the problem admits a polynomial-time approximation algorithm within the factor of $9/8 + o(1)$.
\end{theorem}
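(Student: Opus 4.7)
The plan is to bound the loss $\losstotal{\Tminmin}$ produced by \minmin and a matching lower bound on the optimum loss $\losstotal{\Topt}$, both expressed layer by layer using the diagonal distance $i+j$ from the corner root $\suproot=(0,0)$. The key invariant is that, for any spanning tree of the grid, the total flow crossing the cut separating the vertices at diagonal distance $\geq d$ from those at distance $<d$ equals $S_d$, the number of vertices at distance $\geq d$; this quantity is tree-independent and fixes the sum of flows on all edges in the ``layer'' $d$. Hence the per-layer contribution to the loss is a sum of squares whose indices are constrained to sum to $S_d$.

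First I would invoke the balanced property of $\Tminmin$ stated in Lemma~\ref{lem:balanced}: among the $N_d$ edges of $\Tminmin$ crossing the diagonal-$d$ cut, the flows differ pairwise by at most one, so each flow equals $\lfloor S_d/N_d\rfloor$ or $\lceil S_d/N_d\rceil$. This yields an explicit closed-form upper bound for $\losstotal{\Tminmin}$ in terms of the sequences $(S_d)$ and $(N_d)$ that the algorithm realizes. For the optimum tree $\Topt$, the same cut argument applies with its own numbers $N_d^*$ of crossing edges, and by convexity of $x\mapsto x^2$ (Jensen's inequality) the contribution from layer $d$ is at least $S_d^2/N_d^*$; since $N_d^*$ is bounded by the number of grid edges in the cut (essentially the same order as $N_d$), summing over $d$ produces a lower bound on $\losstotal{\Topt}$ that matches $\losstotal{\Tminmin}$ to leading order, already yielding the factor $1+o(1)$.

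The crux is to push the constant from $2$ down to $9/8$. After the leading-order cancellation, the ratio $\losstotal{\Tminmin}/\losstotal{\Topt}$ is governed by the interplay between the rounding in the balanced property and the flexibility left in choosing $(N_d^*)$ and the flow distribution in $\Topt$. I would phrase this residual as the maximization of a convex function, namely a sum of squared flow allocations, over the polytope of feasible allocations satisfying the layer-sum constraints inherited from the cut invariant. This is precisely the program \eqref{op:Bauer} of Lemma~\ref{lem:Bauer}.

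The main obstacle I anticipate is evaluating this convex maximization, since the standard LP-duality toolbox does not apply and there is no direct reason the optimum should have a clean closed form. Here I would rely on Bauer's maximum principle, which places the maximum of a convex function over a polytope at an extreme point, thus reducing the analytical problem to a combinatorial enumeration of vertices of the polytope. To control the enumeration I would exploit monotonicity of $S_d$ in $d$ together with symmetry across layers to collapse the search to a one-parameter family of ``worst'' configurations concentrated at a single layer, and then carry out the explicit summation on that family to verify that the ratio tends to $9/8$ in the limit. Combining this tight residual bound with the leading-order matching from the previous step gives the claimed approximation factor of $9/8+o(1)$.
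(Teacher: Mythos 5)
Your overall architecture matches the paper's: decompose the loss by diagonal layers, lower-bound the optimum per layer via the cut invariant and Jensen/Cauchy--Schwarz, and upper-bound the \minmin tree per layer by maximizing a convex sum of squares over a polytope at its extreme points. However, there is a genuine error at the crux of the argument. You assert that Lemma~\ref{lem:balanced} says the flows on the $N_d$ tree edges crossing layer $d$ ``differ pairwise by at most one,'' so each equals $\lfloor S_d/N_d\rfloor$ or $\lceil S_d/N_d\rceil$. This is false: by property~(a) of the \minmin tree, the subtrees hanging below layer $k$ (for $k\ge n-1$) are paths of lengths $m-k, m-k+1,\dots,n+m-k-1$, which spread over a range of $n-1$. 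What Lemma~\ref{lem:balanced} actually proves is only that $\treesize{k}{|\Vdist{k}|}\le 2\treesize{k}{1}$, i.e., the largest flow is at most \emph{twice} the smallest. This factor-$2$ slack is precisely where the constant $9/8$ comes from: Lemma~\ref{lem:Bauer} shows that under $x_i\le 2x_j$ and $\sum x_i=C$ the worst case for $\sum x_i^2$ is $\frac{9}{8}\cdot\frac{C^2}{t}$, attained when two thirds of the coordinates equal $\alpha$ and one third equal $2\alpha$. Your version of the balance property would directly give $1+o(1)$, and your subsequent paragraph about ``pushing the constant from $2$ down to $9/8$'' is inconsistent with having already claimed $1+o(1)$; the convex program \eqref{op:Bauer} is not a residual correction after leading-order cancellation but the direct per-layer bound on $\losssub{\Tminmin}{k}$.

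A second gap: even the correct factor-$2$ balance only holds for layers $k\le\kstar$, where $\kstar$ is the last layer whose largest subtree exceeds $2n$ vertices; for deeper layers the subtrees are short paths of sizes $1,2,\dots,n$ and the ratio of largest to smallest is $n$, not $2$. The paper handles this by truncating the sum at $\kstar$ and showing the tail contributes at most $4n^2m^2$, which is $o(1)$ relative to the lower bound $(\ln n-2)n^2m^2$ from \eqref{eq:right_lowerbound}. Your proposal omits this truncation entirely (understandably, since under your ``differ by at most one'' premise it would be unnecessary), but without it the per-layer application of Lemma~\ref{lem:Bauer} breaks down on the deep layers. Finally, your worry about enumerating extreme points across layers is misplaced: the convex maximization is applied independently within each layer with its own sum constraint $C=|\Vdist{\ge k}|$ and its own $t=|\Vdist{k}|$, so no cross-layer symmetry or collapse argument is needed.
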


Before proving the theorem above, we note that the same arguments as Gupta et al.~\cite[Theorem~4]
{DBLP:journals/mp/GuptaKMN22} yield the following corollary from Theorem~\ref{the:apx_uniform}. 
\begin{corollary}
    Consider grids such that demands satisfy $\dem{v} \in [\dem{\min}, \dem{\max}]$, with $\dem{\min}>0$, for all non-root vertices $v \neq \suproot$, $\resis{e} = 1$ for all edges $e$, and the root $\suproot$ is located at a corner of the grids.
    Then, the \EDNR problem on such grids admits a polynomial-time approximation algorithm within the factor of $\alpha^2 (9/8 + o(1))$, where $\alpha = \dem{\max}/\dem{\min}$.
\end{corollary}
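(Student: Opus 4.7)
The plan is to reduce the non-uniform-demand case to the uniform-demand case via a sandwich argument, exactly mirroring Gupta et al.~\cite[Theorem~4]{DBLP:journals/mp/GuptaKMN22}. On the given grid I would apply the \minmin algorithm to the auxiliary instance in which every demand is replaced by $1$, and then bound the loss of the returned tree in the original instance against the optimum of the original instance.

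Let me write $L^{U}(T) := \sum_{e \in F} \resis{e}\,|\des{T}{e}|^{2}$ for the loss in the uniform auxiliary instance. Since $\dem{\min} \le \dem{v} \le \dem{\max}$ for every non-root vertex, a term-by-term comparison of the squared demand sum with $|\des{T}{e}|^{2}$ gives the pointwise bounds
\[
\dem{\min}^{2}\,|\des{T}{e}|^{2} \;\le\; \Bigl(\sum_{v \in \des{T}{e}} \dem{v}\Bigr)^{2} \;\le\; \dem{\max}^{2}\,|\des{T}{e}|^{2}.
\]
Multiplying by $\resis{e}$ and summing over the edges of any spanning tree $T=(V,F)$ yields the sandwich
\[
\dem{\min}^{2}\,L^{U}(T) \;\le\; \losstotal{T} \;\le\; \dem{\max}^{2}\,L^{U}(T).
\]

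Now let $\Tminmin$ be the tree returned by the \minmin algorithm on the uniform auxiliary instance, and let $\Topt$ be an optimal tree for the original non-uniform instance. By Theorem~\ref{the:apx_uniform} applied to the auxiliary instance, $\Tminmin$ is a $(9/8+o(1))$-approximation with respect to $L^{U}$; in particular $L^{U}(\Tminmin) \le (9/8+o(1))\,L^{U}(\Topt)$, since $\Topt$ is one candidate tree for the auxiliary instance. Combining this with the right inequality of the sandwich applied to $\Tminmin$ and the left inequality applied to $\Topt$ gives
\[
\losstotal{\Tminmin} \;\le\; \dem{\max}^{2}\,L^{U}(\Tminmin) \;\le\; (9/8+o(1))\,\dem{\max}^{2}\,L^{U}(\Topt) \;\le\; \frac{\dem{\max}^{2}}{\dem{\min}^{2}}\,(9/8+o(1))\,\losstotal{\Topt},
\]
which is exactly the claimed $\alpha^{2}(9/8+o(1))$ factor.

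The argument has no real obstacle; the only points to verify are the two elementary pointwise inequalities and that the composite procedure runs in polynomial time, which is immediate because relabeling all demands to $1$ is trivial and \minmin runs in polynomial time. The takeaway is purely structural: any future improvement of the factor $9/8+o(1)$ in Theorem~\ref{the:apx_uniform} would propagate through this sandwich argument and automatically tighten the corollary as well.
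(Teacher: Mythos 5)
Your proposal is correct and is exactly the argument the paper intends: the paper proves the corollary by invoking ``the same arguments as Gupta et al.~[Theorem~4],'' which is precisely this sandwich $\dem{\min}^{2}L^{U}(T)\le \losstotal{T}\le \dem{\max}^{2}L^{U}(T)$ combined with Theorem~\ref{the:apx_uniform} on the uniformized instance. No gaps; your writeup just makes explicit what the paper leaves as a citation.
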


\subsection{Lower bounds of the optimal value}
In the remainder of this section, we give a proof of Theorem~\ref{the:apx_uniform}. 
Assume that we are given an $n \times m$ grid $G=(V,E)$ such that $n \le m$, 
$\dem{v} = 1$ for all non-root vertices $v \in V \setminus \{\suproot\}$,
$\resis{e} = 1$ for all edges $e \in E$, and 
the root $\suproot$ is located at a corner of $G$;
more specifically, let $\suproot = (0,0) \in V$.
We may assume that $n \ge 2$ holds; otherwise $G$ is a path~(in particular, $G$ consists of a single vertex $\suproot$ if $m=1$) and hence the \EDNR problem on such a graph $G$ is trivial. 
Therefore, it holds that $\losstotal{T} \neq 0$ for any spanning tree $T$ of the given graph $G$. 

For each $k \in \{0,1,\ldots,n+m-2\}$, we denote by $\Vdist{k}$ the set of all vertices in $V$ that are located at distance exactly $k$ from the root $\suproot$, namely, $\Vdist{k} := \{ (i,j) \in V \mid i + j = k \}$. 
For convenience, we sometimes write $\Vdist{\ge k}$ to denote $\bigcup_{\ell = k}^{n+m-2} \Vdist{\ell}$. 
For each $k \in \{1,2,\ldots,n+m-2\}$, let $\Edist{k}$ be the set of all edges joining vertices in $\Vdist{k-1}$ and $\Vdist{k}$.
Additionally, for any spanning tree $T = (V,F)$ of $G$, we denote by $\losssub{T}{k}$ the sum of losses of edges in $\Edist{k}$, that is, 
\begin{equation} \label{eq:losssub_def}
    \losssub{T}{k} :=
    \sum_{e \in F \cap \Edist{k}} \resis{e} \left(\sum_{v \in \des{T}{e}} \dem{v} \right)^2 = 
    \sum_{e \in F \cap \Edist{k}} |\des{T}{e}|^2.
\end{equation}
Then, $\losstotal{T} = \sum_{k=1}^{n+m-2} \losssub{T}{k}$. 

We now give a lower bound on the loss of any spanning tree $T$ of $G$, as follows. 
\begin{lemma} \label{lem:lowerbound}
    For any spanning tree $T = (V,F)$ of $G$, it holds that 
    \begin{equation} \label{eq:left_lowerbound}
        \losstotal{T} > \sum_{k=1}^{m-1} \frac{|\Vdist{\ge k}|^2}{|\Vdist{k}|}.
    \end{equation}
    In particular, it holds that
    \begin{equation} \label{eq:right_lowerbound}
        \losstotal{T} > (\ln n - 2) n^2m^2.
    \end{equation}
\end{lemma}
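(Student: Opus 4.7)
The plan is to establish the per-level bound $\losssub{T}{k} \ge |\Vdist{\ge k}|^2/|\Vdist{k}|$ for every $k \in \{1,\ldots,m-1\}$, from which both displayed inequalities of the lemma follow by summation and elementary estimates.

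For the per-level bound, I view $T$ as rooted at $\suproot$ and assign to each $v \in \Vdist{\ge k}$ a canonical edge of $F \cap \Edist{k}$ as follows. Since every grid edge joins vertices whose $\suproot$-distances differ by exactly $1$, the sequence of distances to $\suproot$ along the tree path $P_v$ from $v$ to $\suproot$ is an integer walk starting at a value $\ge k$, ending at $0$, with consecutive differences $\pm 1$, and therefore visits $\Vdist{k}$; let $\phi(v) \in \Vdist{k}$ be the visit closest to $\suproot$ on $P_v$. By the choice of $\phi(v)$ the next vertex of $P_v$ lies in $\Vdist{k-1}$ (it cannot lie in $\Vdist{k+1}$, for then $P_v$ would need to revisit $\Vdist{k}$), so the parent $p(\phi(v))$ of $\phi(v)$ in $T$ belongs to $\Vdist{k-1}$, producing an edge $e_u := \{u, p(u)\} \in F \cap \Edist{k}$ for every $u$ in the image of $\phi$. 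Writing $T_u := \phi^{-1}(u)$ for $u \in \Vdist{k}$, the family $\{T_u\}_{u \in \Vdist{k}}$ partitions $\Vdist{\ge k}$; whenever $T_u \neq \emptyset$ one has $T_u \subseteq \des{T}{e_u}$ (since $e_u \in P_v$ for each $v \in T_u$); and distinct $u$'s give distinct edges $e_u$. Restricting the sum in \eqref{eq:losssub_def} to these at-most-$|\Vdist{k}|$ edges and applying Cauchy--Schwarz to the nonnegative reals $(|T_u|)_{u \in \Vdist{k}}$, whose total is $|\Vdist{\ge k}|$, then yields
\[
\losssub{T}{k} \ \ge\ \sum_{u \in \Vdist{k}} |T_u|^2 \ \ge\ \frac{|\Vdist{\ge k}|^2}{|\Vdist{k}|}.
\]

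Inequality \eqref{eq:left_lowerbound} follows by summing this per-level bound over $k=1,\ldots,m-1$; the strict inequality comes from the strictly positive tail $\sum_{k=m}^{n+m-2}\losssub{T}{k}$, which is nonzero because $n \ge 2$ makes $\Vdist{m}$ (hence also $\Vdist{\ge m}$) nonempty. For \eqref{eq:right_lowerbound} I restrict the sum to $k \in \{1,\ldots,n-1\}$, where $|\Vdist{k}| = k+1$ and $|\Vdist{\ge k}| = nm - k(k+1)/2$. Dropping the positive cross-term gives $|\Vdist{\ge k}|^2 \ge n^2 m^2 - nm\,k(k+1)$, whence $\sum_{k=1}^{n-1} |\Vdist{\ge k}|^2/(k+1) \ge n^2 m^2 (H_n - 1) - n^2 m (n-1)/2$, where $H_n := 1 + 1/2 + \cdots + 1/n$. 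Combining $H_n \ge \ln(n+1) > \ln n$ with the hypothesis $n \le m$ (so that $n^2 m (n-1)/2 < n^2 m^2$) produces the claimed bound $(\ln n - 2)\, n^2 m^2$.

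The sole technical obstacle is the per-level bound. A direct Cauchy--Schwarz on $\losssub{T}{k} = \sum_{e \in F \cap \Edist{k}} |\des{T}{e}|^2$ combined with the easy estimate $\sum_{e \in F \cap \Edist{k}} |\des{T}{e}| \ge |\Vdist{\ge k}|$ only yields the weaker denominator $|F \cap \Edist{k}|$; moreover one can construct spanning trees of the grid with $|F \cap \Edist{k}| > |\Vdist{k}|$ (for instance, when some $u \in \Vdist{k}$ is the $T$-parent of both of its grid-neighbors in $\Vdist{k-1}$). The $\phi$-construction circumvents this by selecting at most one edge per vertex of $\Vdist{k}$ while still accounting for every vertex of $\Vdist{\ge k}$.
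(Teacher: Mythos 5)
Your proof is correct, and although it follows the same overall skeleton as the paper's (a per-level bound $\losssub{T}{k} \ge |\Vdist{\ge k}|^2/|\Vdist{k}|$, summed over $k \le m-1$, then the same elementary computation for \eqref{eq:right_lowerbound}), the way you establish the per-level bound is genuinely different --- and it is the sound way. The paper applies Cauchy--Schwarz over all of $F \cap \Edist{k}$ and then invokes the claim that $|F \cap \Edist{k}| \le |\Vdist{k}|$ for every spanning tree; as you point out, that claim is false in general. A concrete witness: in the $3\times 3$ grid with root $(0,0)$, the spanning tree formed by the path $(0,2),(0,1),(1,1),(1,0),(2,0),(2,1),(2,2),(1,2)$ together with the edge $\{(0,0),(0,1)\}$ contains all four edges of $\Edist{2}$, while $|\Vdist{2}|=3$ (here $(1,0)\in\Vdist{1}$ has its tree parent $(1,1)$ in $\Vdist{2}$, which is exactly the phenomenon you describe). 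Your charging map $\phi$ --- sending each $v \in \Vdist{\ge k}$ to the last visit to $\Vdist{k}$ on its tree path to the root and charging $v$ to the parent edge of $\phi(v)$ --- repairs this: it selects at most $|\Vdist{k}|$ distinct edges of $F \cap \Edist{k}$ whose descendant sets respectively contain the blocks of a partition of $\Vdist{\ge k}$, so Cauchy--Schwarz over those selected edges alone legitimately produces the denominator $|\Vdist{k}|$. The remaining steps (strictness from the nonempty tail $\Vdist{\ge m}$ when $n \ge 2$, dropping the positive cross term, the estimate $1+\tfrac12+\cdots+\tfrac1n > \ln n$, and $n \le m$) match the paper's. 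In short: correct, same decomposition, but your per-level argument closes a real gap in the published proof rather than merely rederiving it.
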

\begin{proof}
We first estimate a lower bound on $\losssub{T}{k} = \sum_{e \in F \cap \Edist{k}} |\des{T}{e}|^2$ for each $k \in \{1,2,\ldots,n+m-2\}$.
Notice that $\sum_{e \in F \cap \Edist{k}} |\des{T}{e}| \ge |\Vdist{\ge k}|$ holds.
Since 
\[
\left(\sum_{e \in F \cap \Edist{k}} |\des{T}{e}|^2 \right) \cdot \left( \sum_{e \in F \cap \Edist{k}} 1^2 \right) 
\ge \left(\sum_{e \in F \cap \Edist{k}} |\des{T}{e}| \right)^2
\]
by the Cauchy–Schwarz inequality, we have 
\[
    \losssub{T}{k} = \sum_{e \in F \cap \Edist{k}} |\des{T}{e}|^2
    \ge  \frac{ \left(\sum_{e \in F \cap \Edist{k}} |\des{T}{e}| \right)^2}{|F \cap \Edist{k}|}
    \ge \frac{|\Vdist{\ge k}|^2}{|F \cap \Edist{k}|}
    \ge \frac{|\Vdist{\ge k}|^2}{|\Vdist{k}|},
\]
where the last inequality holds because $|F \cap \Edist{k}| \le |\Vdist{k}|$ holds for any spanning tree $T=(V,F)$.
Then, the inequality~\eqref{eq:left_lowerbound} can be derived as follows: 
\[
    \losstotal{T} = \sum_{k=1}^{n+m-2} \losssub{T}{k}
    > \sum_{k=1}^{m-1} \losssub{T}{k}
    \ge \sum_{k=1}^{m-1} \frac{|\Vdist{\ge k}|^2}{|\Vdist{k}|}.
\]

We then verify the inequality~\eqref{eq:right_lowerbound}.
Notice that $|\Vdist{\ell}| = \ell + 1$ for each $\ell \in \{0,1,\ldots,n-1\}$. 
Then, for each $k \in \{1,2,\ldots,n-1\}$, we have 
\[
    |\Vdist{\ge k}| = nm - \sum_{\ell=0}^{k-1} |\Vdist{\ell}| = nm - \frac{k(k+1)}{2}. 
\]
Therefore, since $m \ge n$, by the inequality~\eqref{eq:left_lowerbound}, we have 
\begin{align*}
    \losstotal{T} 
    &> \sum_{k=1}^{n-1} \frac{|\Vdist{\ge k}|^2}{|\Vdist{k}|} 
    = \sum_{k=1}^{n-1} \frac{1}{k+1} \left( nm - \frac{k(k+1)}{2} \right)^2
    = n^2m^2 \sum_{k=1}^{n-1} \frac{1}{k+1} - nm \sum_{k=1}^{n-1} k + \sum_{k=1}^{n-1}\frac{k^2(k+1)}{4}\\
    &> n^2m^2 (\ln n -1) - n^3m 
    \ge (\ln n -2) n^2m^2.
\end{align*}
In this way, we have verified the inequality~\eqref{eq:right_lowerbound}.
\end{proof}

\subsection{Upper bounds based on the \textsc{Min-Min} algorithm} \label{subsec:minmin}
In this subsection, we give an upper bound on a solution output by the \textsc{Min-Min} algorithm that Gupta et al.~proposed~\cite[Section~5]{DBLP:journals/mp/GuptaKMN22}. 
We do not describe the details of the \minmin algorithm in this paper, but the \minmin algorithm finds a spanning tree $\Tminmin= (V, \Fminmin)$ satisfying the following two properties~(a) and~(b).
Here, for a spanning tree $T$ and a vertex $v$, we denote by $T_v$ the subtree of $T$ induced by $v$ and all descendants of $v$, when we regard $T$ as a tree rooted at $\suproot$. 


\begin{listing}{(a)}
    \item[(a)] Let $k$ be any index in $\{n-1, n, \ldots, m-1\}$. 
               Then, for each vertex $v \in \Vdist{k}$, the subtree $\subtree{v}$ is a path, and these $|\Vdist{k}|$ paths have different lengths. 
        \medskip
    \item[(b)] Let $k$ be any index in $\{1, 2, \ldots, n-2\}$.
               Then, there exists exactly one vertex $z \in \Vdist{k}$ whose degree in $\Tminmin$ is three.
               Furthermore, $\subtree{z}$ consists of the two subtrees with the minimum number of descendants among $\subtree{v}$'s for $v\in \Vdist{k+1}$.
               The other vertices in $\Vdist{k} \setminus \{z\}$ are of degree two in $\Tminmin$. 
\end{listing}
We note that such a tree $\Tminmin$ can be found in polynomial time by finding $|V_{n-1}|$ disjoint paths in $V_{\geq n-1}$ and by merging subtrees with minimum number of descendants repeatedly from $V_{n-1}$ to $V_{0}$.
An example is given in \figurename~\ref{fig:minmin}.

\begin{figure}[tp]
\centering
\includegraphics[width=0.7\linewidth]{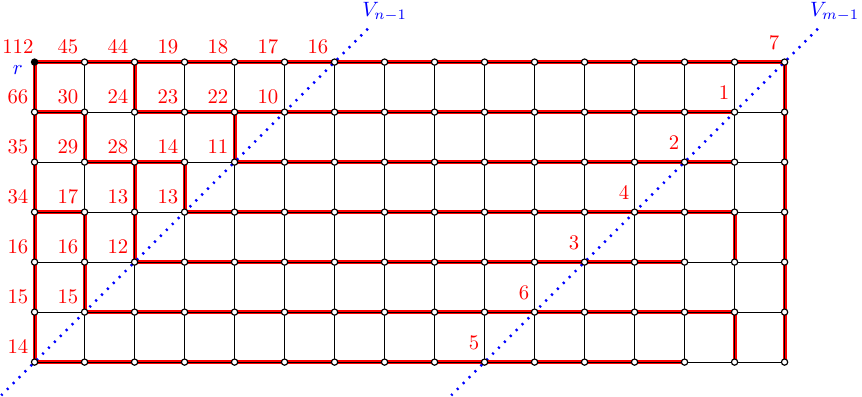}
\caption{The spanning tree $\Tminmin$ of the $7 \times 16$-grid obtained by the \minmin algorithm. The red number attached to a vertex $v$ shows the number of vertices in the subtree $\subtree{v}$.}
\label{fig:minmin}
\end{figure}

For each $k \in \{1,2,\ldots, n+m-2\}$, let $v_k^1, v_k^2, \ldots, v_k^{|\Vdist{k}|}$ be the vertices in $\Vdist{k}$, and  
we assume that these vertices are labeled so that 
$\treesize{k}{1} \le \treesize{k}{2} \le \cdots \le \treesize{k}{|\Vdist{k}|}$, where $\treesize{k}{\ell}$ denotes the number of vertices in the subtree $\subtree{v_k^\ell}$ for each $\ell \in \{1,2,\ldots,|\Vdist{k}|\}$, namely, $\treesize{k}{\ell} := |V(\subtree{v_k^\ell})|$.
Then we have the following observations from the properties~(a) and~(b).
\begin{itemize}
    \item For any index $k$ in $\{n-1, n, \dots, m-1\}$, it holds that
                \begin{align}\label{eq:treesize_path}
                    \left(\treesize{k}{1},\ \treesize{k}{2},\ \ldots,\ \treesize{k}{|\Vdist{k}|} \right) = (m-k,\ m-k+1,\ \ldots,\ n+m-k-1);
                \end{align}
               in particular, $(\treesize{m-1}{1},\ \treesize{m-1}{2},\ \ldots,\ \treesize{m-1}{|\Vdist{m-1}|}) = (1,\ 2, \dots, n)$.
        \medskip
    \item For any index $k$ in $\{1, 2, \ldots, n-2\}$, it holds that 
\begin{align}\label{eq:treesize_merge}
    \left\{\treesize{k}{1}, \treesize{k}{2},\dots, \treesize{k}{|V_k|}\right\}
    =
    \left\{\treesize{k+1}{1}+\treesize{k+1}{2}+1, \treesize{k+1}{3}+1, \treesize{k+1}{4}+1, \dots, \treesize{k}{|V_{k+1}|}+1\right\}.
    \end{align}
    If $\treesize{k}{\ell}=\treesize{k+1}{1}+\treesize{k+1}{2}+1$ for some $\ell\in \{1,2,\ldots,|\Vdist{k}|\}$, then the corresponding vertex $v_k^\ell$ is the degree-$3$ vertex of $\Vdist{k}$ in $\Tminmin$.
\end{itemize}

In what follows, we give an upper bound on the loss of $\Tminmin$ by dividing it into two parts.
For the spanning tree $\Tminmin$ of $G=(V,E)$, let $\kstar$ be the maximum integer $k \in \{1,2,\ldots,n+m-2\}$ that satisfies $\treesize{k}{|\Vdist{k}|} > 2n$.
We can assume without loss of generality that such $\kstar$ always exists; otherwise we have $|V| = nm = 1 + \treesize{1}{1} + \treesize{1}{2} \le 4n + 1$ and hence $n \le m \le 5$, which means the graph $G$ has only a constant number of vertices and we can solve the \EDNR problem optimally in constant time. 
Furthermore, since $\treesize{m-1}{|\Vdist{m-1}|} = n$ by~\eqref{eq:treesize_path}, we know that $\kstar < m-1$. 
Then it holds that 
\begin{align}\label{eq:loss_divid}
\losstotal{\Tminmin} & = \sum_{k=1}^{n+m-2} \losssub{\Tminmin}{k}
= 
\sum_{k=1}^{\kstar} \losssub{\Tminmin}{k}
+
\sum_{k=\beta+1}^{n+m-2} \losssub{\Tminmin}{k},
\end{align}
and thus we will upper-bound each term of the right-hand-side.

By the equation~\eqref{eq:losssub_def}, the total loss $\losssub{\Tminmin}{k}$ on the edges in $\Edist{k}$ can be rephrased as follows: 
for each $k \in \{1,2,\ldots, n+m-2\}$,
\begin{equation} \label{eq:lossonk}
    \losssub{\Tminmin}{k} = 
    \sum_{e \in \Fminmin \cap \Edist{k}} |\des{\Tminmin}{e}|^2 =
    \sum_{\ell=1}^{|\Vdist{k}|} (\treesize{k}{\ell})^2. 
\end{equation}
Thus $\losssub{\Tminmin}{k}$ is a quadratic function on $\treesize{k}{\ell}$'s.

We first consider the second term in~\eqref{eq:loss_divid}.
Since $\losssub{\Tminmin}{k}$ is a quadratic function on $\treesize{k}{\ell}$'s, $\losssub{\Tminmin}{k}$ can be bounded from above when $\treesize{k}{|V_k|}$ is small.
This intuition implies the following lemma. 
\begin{lemma}
It holds that 
    \begin{equation} \label{eq:right_upperbound}
        \sum_{k=\kstar+1}^{n+m-2} \losssub{\Tminmin}{k} \le 4n^2m^2.
    \end{equation}
\end{lemma}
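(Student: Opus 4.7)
The plan is to exploit the very definition of $\kstar$: for every $k > \kstar$ we have $\treesize{k}{|\Vdist{k}|} \le 2n$, and since the $\treesize{k}{\ell}$ are sorted in non-decreasing order, every subtree size $\treesize{k}{\ell}$ at level $k$ is bounded by $2n$. This lets us replace one factor in the quadratic expression \eqref{eq:lossonk} by $2n$, turning the sum of squares into something linear that is easy to estimate globally.

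Concretely, I would first write, for each $k \in \{\kstar+1, \dots, n+m-2\}$,
\[
\losssub{\Tminmin}{k} = \sum_{\ell=1}^{|\Vdist{k}|} (\treesize{k}{\ell})^2 \le 2n \sum_{\ell=1}^{|\Vdist{k}|} \treesize{k}{\ell}.
\]
The next step is to interpret the linear sum: the subtrees $\subtree{v_k^{\ell}}$, for $\ell \in \{1,\dots,|\Vdist{k}|\}$, partition $\Vdist{\ge k}$ when we view $\Tminmin$ as rooted at $\suproot$, so $\sum_{\ell=1}^{|\Vdist{k}|} \treesize{k}{\ell} = |\Vdist{\ge k}| \le |V| = nm$. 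Hence $\losssub{\Tminmin}{k} \le 2n \cdot nm = 2n^2 m$ for each such $k$.

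Finally, I would sum over the range of $k$. The number of indices is $n+m-2-\kstar \le n+m-2 \le 2m-2 < 2m$, using $\kstar \ge 0$ and $n \le m$. Therefore
\[
\sum_{k=\kstar+1}^{n+m-2} \losssub{\Tminmin}{k} \le 2m \cdot 2n^2 m = 4 n^2 m^2,
\]
which is the claimed inequality.

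There is essentially no real obstacle in this argument; the only subtle point is making sure that the definition of $\kstar$ is correctly used to upper-bound \emph{every} $\treesize{k}{\ell}$ at level $k > \kstar$ (not only the maximum one), which is immediate from the sorted labeling convention $\treesize{k}{1} \le \treesize{k}{2} \le \cdots \le \treesize{k}{|\Vdist{k}|}$. The rest is a direct linearization of the quadratic bound together with the trivial estimate $|\Vdist{\ge k}| \le nm$ and the range bound $n+m-2 \le 2m$.
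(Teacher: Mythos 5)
Your proof is correct and follows essentially the same route as the paper: both arguments rest on the observation that the definition of $\kstar$ (together with the sorted labeling) forces every $\treesize{k}{\ell} \le 2n$ for $k > \kstar$, and then differ only in trivial bookkeeping (the paper bounds each term by $(2n)^2$ and counts at most $nm$ summands to get $4n^3m \le 4n^2m^2$, while you factor out one $2n$, bound the linear sum by $nm$, and multiply by fewer than $2m$ levels). Both yield the claimed bound.
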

\begin{proof}
By the choice of $\kstar$, we have 
$\treesize{k}{1} \le \treesize{k}{2} \le \cdots \le \treesize{k}{|\Vdist{k}|} \le 2n$ for all $k \in \{\kstar+1, \kstar+2, \ldots, n+m-2\}$. 
By the equation~\eqref{eq:lossonk} we have
    \[
        \sum_{k=\kstar+1}^{n+m-2} \losssub{\Tminmin}{k} 
        = \sum_{k=\kstar+1}^{n+m-2} \sum_{\ell=1}^{|\Vdist{k}|} (\treesize{k}{\ell})^2 
        \le \sum_{k=\kstar+1}^{n+m-2} \sum_{\ell=1}^{|\Vdist{k}|} (2n)^2
        \le 4n^3m,
    \]
where the last inequality holds because there are at most $nm$ vertices in the graph $G$, and hence there are at most $nm$ summands in the summation.
Therefore, since $n \le m$, the inequality~\eqref{eq:right_upperbound} holds.
\end{proof}

Therefore, it suffices to estimate the first term in~\eqref{eq:loss_divid}.
To this end, we first prove that the subtrees of $\Tminmin$ rooted at the vertices in  $\Vdist{k}$ are balanced in the following sense.
\begin{lemma} \label{lem:balanced}
    For each $k \in \{1,2,\ldots, \kstar\}$, it holds that $\treesize{k}{|\Vdist{k}|} \le 2 \treesize{k}{1}$. 
\end{lemma}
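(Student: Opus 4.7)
The plan is to prove the lemma by descending induction on $k$, with the two regimes $k \in \{n-1, \ldots, m-2\}$ (where \eqref{eq:treesize_path} applies) and $k \in \{1, \ldots, n-2\}$ (where \eqref{eq:treesize_merge} applies) handled separately.

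For $k$ in the path region with $k \leq \kstar$, formula \eqref{eq:treesize_path} gives $\treesize{k}{1} = m-k$ and $\treesize{k}{|\Vdist{k}|} = m-k+n-1$. Combined with $\treesize{k}{|\Vdist{k}|} > 2n$, this forces $m - k \geq n + 2$, so
\[\frac{\treesize{k}{|\Vdist{k}|}}{\treesize{k}{1}} = 1 + \frac{n-1}{m-k} \leq 1 + \frac{n-1}{n+2} < 2.\]

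For $k$ in the merging region with $k \leq \kstar$, I set $M := \treesize{k+1}{1} + \treesize{k+1}{2} + 1$. By \eqref{eq:treesize_merge}, the values at level $k$ form the multiset $\{M\} \cup \{\treesize{k+1}{\ell} + 1 : 3 \leq \ell \leq |\Vdist{k+1}|\}$. The argument splits according to whether $M$ is the new maximum. If $M \geq \treesize{k+1}{|\Vdist{k+1}|} + 1$ (Case~(i)), then $\treesize{k}{|\Vdist{k}|} = M$ and $\treesize{k}{1} = \treesize{k+1}{3} + 1$; since $\treesize{k+1}{1}, \treesize{k+1}{2} \leq \treesize{k+1}{3}$, I obtain $M \leq 2\treesize{k+1}{3} + 1 < 2(\treesize{k+1}{3} + 1) = 2\treesize{k}{1}$, with no appeal to the inductive hypothesis. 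Otherwise (Case~(ii)), $\treesize{k}{|\Vdist{k}|} = \treesize{k+1}{|\Vdist{k+1}|} + 1$ and $\treesize{k}{1} = \min\{M, \treesize{k+1}{3} + 1\}$; invoking the inductive hypothesis $\treesize{k+1}{|\Vdist{k+1}|} \leq 2\treesize{k+1}{1}$ (available when $k+1 \leq \kstar$), together with $\treesize{k+1}{1} \leq \treesize{k+1}{3}$ and $\treesize{k+1}{1} \leq M - 1$, yields $\treesize{k+1}{|\Vdist{k+1}|} + 1 \leq 2 \min\{M, \treesize{k+1}{3} + 1\} = 2 \treesize{k}{1}$.

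The main obstacle is the boundary sub-case of Case~(ii) at $k = \kstar$ with $\kstar \leq n-2$, since there $k+1 > \kstar$ and the inductive hypothesis at level $k+1$ is unavailable. In this sub-case, the definition of $\kstar$ forces $\treesize{k+1}{|\Vdist{k+1}|} \leq 2n$, whereas $\treesize{k}{|\Vdist{k}|} = \treesize{k+1}{|\Vdist{k+1}|} + 1 > 2n$, pinning down $\treesize{k+1}{|\Vdist{k+1}|} = 2n$ and $\treesize{k}{|\Vdist{k}|} = 2n+1$. To close the argument I must show $\treesize{k}{1} \geq n+1$, i.e., both $M \geq n+1$ and $\treesize{k+1}{3} \geq n$. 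My plan is to derive these bounds from the strict monotonicity $\treesize{k'}{1} \geq \treesize{k'+1}{1}+1$ at each merge step, combined with the counting identity $\sum_\ell \treesize{k+1}{\ell} = |\Vdist{\geq k+1}|$, which together prevent the three smallest values at level $k+1$ from being too small when the maximum there is as large as $2n$. A cleaner resolution would be to show that the greedy choices of the \minmin algorithm actually preclude Case~(ii) at $k = \kstar$ by forcing $\treesize{k+1}{1} + \treesize{k+1}{2} > 2n$ whenever $\treesize{k+1}{|\Vdist{k+1}|}$ first reaches $2n$, so that only Case~(i) can arise at the base of the induction.
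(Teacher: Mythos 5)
Your strategy---descending induction on the level $k$, split according to whether the merged value $M$ becomes the new maximum---is genuinely different from the paper's, and the path-region computation and Cases~(i) and~(ii) are correct as far as they go. However, the proof is not complete: the base case you flag ($k=\kstar$ with $\kstar \le n-2$ and Case~(ii) occurring there) is a genuine gap, not a formality. In that sub-case you must show $\min\{M,\ \treesize{\kstar+1}{3}+1\} \ge n+1$, and neither the monotonicity $\treesize{k'}{1} \ge \treesize{k'+1}{1}+1$ nor the counting identity you cite yields, by itself, any lower bound on $\treesize{\kstar+1}{2}$ or $\treesize{\kstar+1}{3}$; your alternative suggestion (that the greedy merges preclude Case~(ii) at $k=\kstar$) is likewise only a conjecture. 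As written, the argument proves the lemma only when $\kstar \ge n-1$, or when Case~(i) happens to hold at the base.

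The paper avoids this boundary issue by not using induction at all. For each $k \le \kstar$ it reasons directly about the largest subtree $\subtree{v_k^{|\Vdist{k}|}}$. If that subtree is a path, then $\treesize{k}{|\Vdist{k}|} \le n+m-k-1$ and $\treesize{k}{1} > m-k-1$ by \eqref{eq:treesize_path}, so $\treesize{k}{|\Vdist{k}|} < n + \treesize{k}{1}$; combined with $\treesize{k}{|\Vdist{k}|} > 2n$ this forces $\treesize{k}{1} > n$ and hence the claim. Otherwise it takes the branching vertex $z \in \Vdist{\ell-1}$ of that subtree closest to $v_k^{|\Vdist{k}|}$, writes $\treesize{k}{|\Vdist{k}|} = \treesize{\ell}{1}+\treesize{\ell}{2}+\ell-k \le 2\treesize{\ell}{2}+\ell-k$, and bounds $\treesize{k}{1} \ge \treesize{\ell-1}{1}+\ell-k-1 \ge \treesize{\ell}{2}+\ell-k$ via \eqref{eq:treesize_merge}. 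Because the comparison is anchored at the level of $z$ rather than at level $k+1$, it never invokes the statement at a level beyond $\kstar$ and has no base case to worry about. If you want to keep your inductive framing, the cleanest repair of Case~(ii) is to replace the appeal to the inductive hypothesis by this same descent to the first branching vertex inside the maximal subtree.
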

\begin{proof}
    We divide the proof into two cases, depending on whether $\subtree{v_k^{|\Vdist{k}|}}$ is a path or not. 
    
    First, consider the case where $\subtree{v_k^{|\Vdist{k}|}}$ is a path. 
    Then, since $\treesize{m-1}{|\Vdist{m-1}|} = n$ by~\eqref{eq:treesize_path}, we have
    \begin{equation} \label{eq:balance1}
        \treesize{k}{|\Vdist{k}|} \le \treesize{m-1}{|\Vdist{m-1}|} + (m-1)-k = n + m - k- 1. 
    \end{equation}
    On the other hand, as $\subtree{v_k^1}$ is a tree and $\treesize{m-1}{1}=1$ by~\eqref{eq:treesize_path}, we can estimate $\treesize{k}{1}$, as follows:
    \begin{equation} \label{eq:balance2}
        \treesize{k}{1} \ge \treesize{m-1}{1} + (m-1)-k > m - k-1.
    \end{equation}
    By the inequalities~\eqref{eq:balance1} and~\eqref{eq:balance2}, we have 
    \[
        \treesize{k}{|\Vdist{k}|} < n + \treesize{k}{1} \le 2 \cdot \max \left\{ n,\ \treesize{k}{1} \right\} = 2 \treesize{k}{1},  
    \]
    where the last equality holds because we know that $\treesize{k}{|\Vdist{k}|} > 2n$ by the choice of $\kstar$. 
    Therefore, we have proved the lemma for this case.

    Second, consider the case where $\subtree{v_k^{|\Vdist{k}|}}$ is not a path. 
    Let $z$ be the vertex in $\subtree{v_k^{|\Vdist{k}|}}$ that 
    has two children and is closest to $v_k^{|\Vdist{k}|}$~(possibly, $z = v_k^{|\Vdist{k}|}$). 
    Let $\ell$ $(> k)$ be the index with $z \in \Vdist{\ell-1}$. 
    Then, by~\eqref{eq:treesize_merge}, 
    we know that 
    \[
        |V(\subtree{z})| = |\{z\}| + \left|V\left(\subtree{v_\ell^1}\right)\right| + \left|V\left(\subtree{v_\ell^2}\right)\right| = 1+\treesize{\ell}{1}+\treesize{\ell}{2}.
    \]
    Therefore, we can estimate $\treesize{k}{|\Vdist{k}|}$ as follows:
    \begin{equation} \label{eq:balance3}
        \treesize{k}{|\Vdist{k}|} = |V(\subtree{z})|  + \ell- k - 1 
        = \treesize{\ell}{1}+\treesize{\ell}{2} + \ell - k 
        \le 2 \treesize{\ell}{2} + \ell - k. 
    \end{equation}
    On the other hand, since $\treesize{\ell-1}{1}=\min \left\{\treesize{\ell}{1}+\treesize{\ell}{2} + 1,\ \treesize{\ell}{3} + 1 \right\}$ by~\eqref{eq:treesize_merge}, we estimate $\treesize{k}{1}$ as follows:
    \begin{align} 
        \treesize{k}{1} &\ge \treesize{\ell-1}{1} + \ell - k - 1 
        = \min \left\{\treesize{\ell}{1}+\treesize{\ell}{2} + 1,\ \treesize{\ell}{3} + 1 \right\} + \ell - k -1 
        \ge \treesize{\ell}{2} + \ell - k. \label{eq:balance4}
    \end{align}
    By the inequalities~\eqref{eq:balance3} and~\eqref{eq:balance4}, we have 
    \[
        \treesize{k}{|\Vdist{k}|} \le 2 \treesize{\ell}{2} + \ell - k < 2 (\treesize{\ell}{2} + \ell - k) \le 2 \treesize{k}{1}.
    \]
    Therefore, we have proved the lemma for the second case.
\end{proof}

It follows that the sum of the squares for given real numbers can be upper-bounded when these numbers are balanced.
\begin{lemma} \label{lem:Bauer}
Let $x_1,x_2,\dots,x_t$ be non-negative real numbers such that 
$x_i \le 2x_j$ for every $i,j \in \{1,2,\dots,t\}$, and 
let $C = \sum_{i=1}^t x_i$. 
%
Then, it holds that $\sum_{i=1}^t x_i^2 \le \frac{9}{8} \cdot \frac{C^2}{t}$.
\end{lemma}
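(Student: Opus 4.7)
The plan is to formulate the bound as the optimum of a convex-function-maximization problem,
\[
\max \left\{ \sum_{i=1}^t x_i^2 \;:\; x \in \R^t_{\ge 0},\ \sum_{i=1}^t x_i = C,\ x_i \le 2 x_j \text{ for all } i, j \in \{1, \dots, t\} \right\},
\]
and estimate its optimum using Bauer's maximum principle: since the objective is convex and the feasible region is a compact convex polytope, the maximum is attained at a vertex.

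The structural step is to characterize the vertices. I plan to show that at any vertex $x^*$, if $m = \min_i x_i^*$, then every coordinate $x_i^*$ belongs to $\{m, 2m\}$. In particular the maximum coordinate must actually equal $2m$, since otherwise no ratio constraint $x_i \le 2 x_j$ would be active and only $\sum_i x_i = C$ would be, which is too few active constraints for a vertex in $\R^t$. Letting $k$ denote the number of coordinates equal to $2m$ at such a vertex, the equality $2km + (t-k) m = C$ forces $m = C/(t+k)$, so
\[
\sum_{i=1}^t (x_i^*)^2 = k (2m)^2 + (t-k) m^2 = \frac{(3k + t)\, C^2}{(t + k)^2}.
\]
The lemma then reduces to the one-variable inequality $\phi(k) := (3k+t)/(t+k)^2 \le 9/(8t)$ for $k \in \{0, 1, \ldots, t\}$. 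Treating $k$ as a real variable, the derivative $\phi'(k) = (t - 3k)/(t+k)^3$ vanishes only at $k = t/3$, where $\phi(t/3) = (2t)/(4t/3)^2 = 9/(8t)$. Hence $\sum_i x_i^2 \le 9C^2/(8t)$, as desired.

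The main obstacle I expect is the vertex characterization, specifically ruling out ``intermediate'' coordinates $x_i^* \in (m, 2m)$ at a vertex. I would handle this by a local perturbation argument: if such an intermediate coordinate existed alongside coordinates at $m$ and at $2m$ (which exist since $M = 2m > m$), then one can construct two linearly independent directions $d \in \R^t$ with $\sum_i d_i = 0$ along which $x^* \pm \varepsilon d$ remains feasible for small $\varepsilon$, by offsetting changes at an intermediate coordinate against changes at coordinates achieving $m$ or $2m$. This contradicts $x^*$ being a vertex. Once this structural lemma is established, the remaining calculus is routine, and the overall bound is tight, matching the extreme configuration in which $k \approx t/3$ coordinates sit at $2m$ and the rest at $m$.
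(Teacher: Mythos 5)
Your proposal is correct and follows essentially the same route as the paper: maximize the convex objective over the polytope, invoke the extreme-point principle, show via a perturbation argument that every vertex has all coordinates in $\{m,2m\}$, and then optimize the resulting one-variable expression (your $k$ counts the coordinates at $2m$ while the paper's counts those at $m$, which is only a cosmetic difference). One small caution for the write-up: a single feasible direction $d$ with $x^*\pm\varepsilon d$ feasible already contradicts extremality, but to get even one such direction you must move all non-intermediate coordinates proportionally (so that active constraints $x_a=2x_b$ stay satisfied in both directions), which is exactly the rescaling the paper uses.
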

\begin{proof}
Since the lemma clearly holds when $t=1$, we assume that $t \ge 2$. 

To prove the lemma, for a fixed non-negative real number $C$, 
we will show that
the optimal value of the optimization problem below is at most $\frac{9}{8} \cdot \frac{C^2}{t}$:
\begin{equation} \label{op:Bauer}
\begin{array}{cll}
\max & \displaystyle \sum_{i=1}^tx_i^2 \\
\mbox{s.t.} & \displaystyle \sum_{i=1}^tx_i = C, \\
& x_i \le 2x_j & (\forall i,j \in \{1,2,\dots,t\}), \\
& x_i \geq 0 & (\forall i \in \{1,2,\dots,t\}). 
\end{array}
\end{equation}

\begin{figure}[ht]
    \centering
\includegraphics{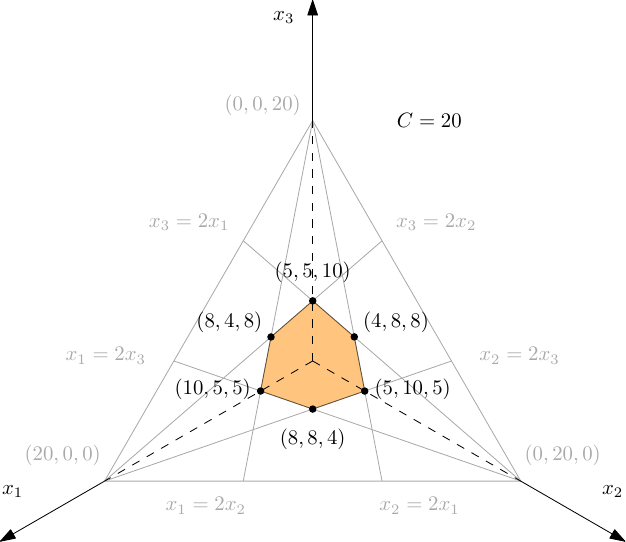}
    \caption{The feasible region of the optimization problem \eqref{op:Bauer} when $t=3$ and $C=20$. This is a $2$-dimensional polytope~(i.e., a convex polygon) with six extreme points.}
    \label{fig:feasibleregion}
\end{figure}

The feasible region of the optimization problem~\eqref{op:Bauer} forms a polytope; see Figure \ref{fig:feasibleregion}.
It is well-known~(see, e.g.,~\cite[Theorem~3]{LuenbergerY15}) that 
the problem of maximizing a convex function over a polytope has an optimal solution that is an extreme point of the polytope.
Therefore, the problem~\eqref{op:Bauer} has an optimal solution $x$ such that there do not exist two distinct feasible solutions $x', x''$ such that $x=\lambda x'+(1-\lambda)x''$ for some $\lambda\in (0, 1)$.

\begin{claim} \label{claim:Bauer}
Let $x = (x_1,x_2,\dots,x_t)$ be an optimal solution 
to the optimization problem~\eqref{op:Bauer} such that it is an extreme point.
Then, there exists a real number $\alpha$ such that 
$x_i \in \{\alpha,2\alpha\}$ for every 
$i\in\{1,2,\dots,t\}$.
\end{claim}
\noindent
\begin{proof}[Proof of Claim~\ref{claim:Bauer}]
We may assume that $C > 0$ as the claim clearly holds when $C=0$.
Let $x_{\min}$~(resp., $x_{\max}$) be the minimum~(resp., maximum) of $x_1,x_2,\dots,x_t$. 
Since $C \neq 0$ and $x_{\max} \le 2x_{\min}$, 
we have $x_{\min} > 0$. 
We note that $x_{\min}\leq x_i\leq 2x_{\min}$ for every $i \in \{1,2,\dots,t\}$.

We will prove that the claim holds with $\alpha=x_{\min}$, that is, $x_i\in\{x_{\min}, 2x_{\min}\}$ for every $i\in\{1,2,\dots,t\}$.
Suppose to the contrary that there exists $i \in \{1,2,\dots,t\}$ such that $x_{\min}<x_i<2x_{\min}$.
For simplicity, we assume that $i=t$.

Define two vectors $x^{\prime}$ and $x^{\prime\prime}$ in $\mathbb{R}^t$, respectively, as
\begin{align*}
x^{\prime}_t &:= x_t + \varepsilon, & 
x^{\prime}_i &:=  \frac{C-x_t-\varepsilon}{C - x_t}x_i\quad (i \in \{1,2,\dots,t-1\}),  \\
x^{\prime\prime}_t &:= x_t - \varepsilon, &
x^{\prime\prime}_i &:=  \frac{C-x_t+\varepsilon}{C - x_t}x_i\quad (i \in \{1,2,\dots,t-1\}), 
\end{align*}
where $\varepsilon>0$ is set to be sufficiently small so that
\begin{equation*}
\varepsilon <
\min\left\{
\frac{(C - x_t)(2x_{\min} - x_t)}{C + 2x_{\min} - x_t}, \ 
\frac{(C - x_t)(x_t - \frac{1}{2}x_{\max})}{C + \frac{1}{2}x_{\max} - x_t}\right\}.
\end{equation*}
Note that the right-hand-side of this inequality is positive, since we have
$x_t < 2x_{\min}$, 
$x_{\max} \le 2x_{\min} < 2x_t$,
and $C - x_t \ge x_{\min} > 0$. 
We observe that 
$x^{\prime}$ and $x^{\prime\prime}$ are feasible solutions to 
the problem~\eqref{op:Bauer}.
Indeed, $x^{\prime}$ is feasible to the problem~\eqref{op:Bauer}, as $x^{\prime}$
clearly satisfies the first constraint and the second constraint for every $i\in \{1,2,\dots,t-1\}$ and $j\in \{1,2,\dots,t\}$, and furthermore, since it holds that
\begin{equation*}
\varepsilon \le 
\frac{(C - x_t)(2x_{\min} - x_t)}{C + 2x_{\min} - x_t}
\ \Longleftrightarrow \ 
x_t + \varepsilon 
\le 
2 \frac{C - x_t - \varepsilon}{C - x_t}x_{\min},
\end{equation*}
we have $x_t^{\prime} \le 2 x_i^{\prime}$. 
We can prove the feasibility of $x^{\prime\prime}$ in the 
same way. 

Therefore, we have $x = \frac{1}{2}(x^{\prime} + x^{\prime\prime})$, which contradicts the fact that 
$x$ is an extreme point.
Thus, it holds that $x_i\in\{x_{\min}, 2x_{\min}\}$ for every $i\in\{1,2,\dots,t\}$, which completes the proof of the claim.
\end{proof}

Claim~\ref{claim:Bauer} implies that 
the optimization problem~\eqref{op:Bauer} can be rewritten as the following optimization problem with two variables $k$ and $\alpha$:
\begin{equation*} \label{ip:Bauer}
\begin{array}{cll}
\max & k \cdot \alpha^2 + (t - k) \cdot 4 \alpha^2\\
\mbox{s.t.} & k \cdot \alpha + (t - k) \cdot 2\alpha = C,\\
& k \in \{0,1,\dots,t\}, \alpha \geq 0. 
\end{array}
\end{equation*}
Since the first constraint means 
$\alpha = \frac{C}{2t - k}$, 
we can remove the variable $\alpha$ by substituting it: 
\begin{equation*} \label{ip2:Bauer}
\max \quad \frac{(4t - 3k)C^2}{(2t-k)^2}
\quad
\mbox{s.t.} \quad k \in \{0,1,\dots,t\}. 
\end{equation*}
Therefore, the optimal value of the optimization problem~\eqref{op:Bauer} is bounded from above by that of
\begin{equation}\label{ub:Bauer}
\max \quad \frac{(4t - 3k)C^2}{(2t-k)^2}
\quad \mbox{s.t.} \quad 0\le k\le t. 
\end{equation}
By a simple calculation, the derivative of the objective function with respect to variable $k$ is 
\[
\frac{(2t - 3k) C^2}{(2t-k)^3}. 
\]
Therefore, we can see that the maximum of~\eqref{ub:Bauer} is $\frac{9}{8} \cdot \frac{C^2}{t}$ when $k = \frac{2}{3}t$.
This completes the proof. 
\end{proof}

We now apply Lemmas~\ref{lem:balanced} and~\ref{lem:Bauer} to estimate the first term in~\eqref{eq:loss_divid}.

\begin{lemma}
It holds that
\begin{align} \label{eq:left_upperbound}
    \sum_{k=1}^{\kstar} \losssub{\Tminmin}{k} 
    & \leq 
\frac{9}{8} \sum_{k=1}^{\kstar} \frac{|\Vdist{\ge k}|^2}{|\Vdist{k}|}.
\end{align}
\end{lemma}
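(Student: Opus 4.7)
The plan is to deduce the inequality by applying Lemma~\ref{lem:Bauer} to each layer $k$ separately and then summing. The three ingredients to assemble are: the layer-wise expression~\eqref{eq:lossonk} for $\losssub{\Tminmin}{k}$, a partition identity showing that $\sum_{\ell=1}^{|\Vdist{k}|}\treesize{k}{\ell} = |\Vdist{\ge k}|$, and the balancedness guarantee of Lemma~\ref{lem:balanced}.

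First I would observe that, because $G$ is bipartite with the layers $\Vdist{0},\Vdist{1},\ldots,\Vdist{n+m-2}$ (any edge of $G$ joins consecutive layers), every path in $\Tminmin$ from a vertex of $\Vdist{\ge k}$ down to the root $\suproot$ traverses exactly one vertex in $\Vdist{k}$. Consequently the subtrees $\subtree{v_k^1},\subtree{v_k^2},\ldots,\subtree{v_k^{|\Vdist{k}|}}$ partition $\Vdist{\ge k}$, which yields
\[
\sum_{\ell=1}^{|\Vdist{k}|} \treesize{k}{\ell} \;=\; |\Vdist{\ge k}|.
\]
This is the step that, although conceptually simple, deserves to be spelled out carefully, as it is the only place where the grid (or, more generally, the layered) structure is invoked.

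Next, I would fix any $k \in \{1,2,\ldots,\kstar\}$ and set $x_\ell := \treesize{k}{\ell}$ for $\ell \in \{1,2,\ldots,|\Vdist{k}|\}$. By Lemma~\ref{lem:balanced} we have $\treesize{k}{|\Vdist{k}|}\le 2\treesize{k}{1}$, and by the labeling convention $\treesize{k}{1}\le \treesize{k}{\ell}\le \treesize{k}{|\Vdist{k}|}$ for every $\ell$, so $x_i \le 2 x_j$ holds for every pair $i,j$. Applying Lemma~\ref{lem:Bauer} with $t = |\Vdist{k}|$ and $C = |\Vdist{\ge k}|$, and then using~\eqref{eq:lossonk}, gives
\[
\losssub{\Tminmin}{k} \;=\; \sum_{\ell=1}^{|\Vdist{k}|}(\treesize{k}{\ell})^2 \;\le\; \frac{9}{8}\cdot\frac{|\Vdist{\ge k}|^2}{|\Vdist{k}|}.
\]

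Finally, summing this inequality over $k=1,2,\ldots,\kstar$ yields the claimed bound~\eqref{eq:left_upperbound}. No step here involves serious calculation; the entire work has already been absorbed into Lemma~\ref{lem:balanced} (structural properties of $\Tminmin$) and Lemma~\ref{lem:Bauer} (the convex maximization argument). I do not anticipate a real obstacle, since the only nontrivial point is the partition identity in the first paragraph, and this is an immediate consequence of the bipartite layering of the grid together with the fact that $\Tminmin$ is a spanning tree rooted at $\suproot$.
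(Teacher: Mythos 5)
Your proposal is correct and follows essentially the same route as the paper: apply Lemma~\ref{lem:Bauer} layer by layer with $t=|\Vdist{k}|$ and $C=|\Vdist{\ge k}|$, using Lemma~\ref{lem:balanced} to verify the hypothesis $x_i\le 2x_j$ and the fact that the subtrees rooted at $\Vdist{k}$ span $\Vdist{\ge k}$ to identify $C$. The only difference is that you spell out the partition identity slightly more explicitly than the paper does.
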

\begin{proof}
By Lemmas~\ref{lem:balanced} and~\ref{lem:Bauer}, it holds that 
\begin{align*}
    \sum_{k=1}^{\kstar} \losssub{\Tminmin}{k} 
    &= \sum_{k=1}^{\kstar} \sum_{\ell=1}^{|\Vdist{k}|} (\treesize{k}{\ell})^2
    \le \sum_{k=1}^{\kstar} \frac{9}{8} \left(\sum_{\ell=1}^{|\Vdist{k}|} \treesize{k}{\ell} \right)^2  \frac{1}{|\Vdist{k}|} = \frac{9}{8} \sum_{k=1}^{\kstar} \frac{|\Vdist{\ge k}|^2}{|\Vdist{k}|},
\end{align*}
where the last equality holds because the subtrees $\subtree{v_k^1}, \subtree{v_k^2}, \ldots, \subtree{v_k^{|\Vdist{k}|}}$ span all vertices in $\Vdist{\ge k}$.
\end{proof}

The following lemma completes the proof of Theorem~\ref{the:apx_uniform}.
\begin{lemma}
    Let $\Topt = (V, \Fopt)$ be a spanning tree of $G$ that is an optimal solution to the \EDNR problem on $G$. 
    Then, it holds that 
    \[
        \losstotal{\Tminmin} < \left( \frac{9}{8} + o(1) \right) \losstotal{\Topt}.
    \]
\end{lemma}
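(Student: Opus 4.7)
The plan is to combine the decomposition $\losstotal{\Tminmin} = \sum_{k=1}^{\kstar} \losssub{\Tminmin}{k} + \sum_{k=\kstar+1}^{n+m-2} \losssub{\Tminmin}{k}$ from~\eqref{eq:loss_divid} with the three ingredients already established: the upper bound~\eqref{eq:left_upperbound} on the head of the sum, the upper bound~\eqref{eq:right_upperbound} on the tail, and the lower bound~\eqref{eq:left_lowerbound} on $\losstotal{\Topt}$. The head of the sum and the lower bound on $\losstotal{\Topt}$ both involve the quantity $\sum_{k=1}^{m-1} |\Vdist{\geq k}|^2 / |\Vdist{k}|$, so they match up almost perfectly after applying the factor $9/8$; the tail will be absorbed into the $o(1)$ term using~\eqref{eq:right_lowerbound}.

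Concretely, I first use $\kstar < m-1$ together with~\eqref{eq:left_upperbound} to write
\[
\sum_{k=1}^{\kstar} \losssub{\Tminmin}{k} \le \frac{9}{8} \sum_{k=1}^{\kstar} \frac{|\Vdist{\ge k}|^2}{|\Vdist{k}|} \le \frac{9}{8} \sum_{k=1}^{m-1} \frac{|\Vdist{\ge k}|^2}{|\Vdist{k}|} < \frac{9}{8}\losstotal{\Topt},
\]
where the last strict inequality is exactly~\eqref{eq:left_lowerbound} applied to the optimal tree $\Topt$. Combining this with the tail bound~\eqref{eq:right_upperbound} yields
\[
\losstotal{\Tminmin} < \frac{9}{8}\losstotal{\Topt} + 4n^2m^2.
\]

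It then remains to argue that the additive term $4n^2m^2$ is negligible relative to $\losstotal{\Topt}$. For this I invoke~\eqref{eq:right_lowerbound}, which gives $\losstotal{\Topt} > (\ln n - 2)n^2m^2$, and therefore
\[
\frac{4n^2m^2}{\losstotal{\Topt}} < \frac{4}{\ln n - 2} = o(1)
\]
as $n \to \infty$. Dividing through by $\losstotal{\Topt}$ gives $\losstotal{\Tminmin}/\losstotal{\Topt} < 9/8 + o(1)$, which is the claim.

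I do not anticipate any real obstacle in this step: all of the hard work has been done in Lemmas~\ref{lem:lowerbound}, \ref{lem:balanced}, and~\ref{lem:Bauer}, and in the bounds~\eqref{eq:left_upperbound} and~\eqref{eq:right_upperbound}. The only mildly delicate point is making sure that the $o(1)$ is valid in the regime where $n$ may stay bounded while $m$ grows, but since $n \le m$ we have $n^2m^2 \le nm\cdot nm$, and so long as we view the approximation guarantee in the asymptotic regime $n \to \infty$ (as in the statement of Theorem~\ref{the:apx_uniform}) the bound $4/(\ln n - 2)$ is indeed $o(1)$; for bounded $n$ the graph has at most linearly many vertices in $m$ and the $(9/8+o(1))$ guarantee can be checked directly or subsumed into constants.
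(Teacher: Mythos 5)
Your proposal is correct and follows essentially the same route as the paper's own proof: combine the decomposition~\eqref{eq:loss_divid} with the upper bounds~\eqref{eq:left_upperbound} and~\eqref{eq:right_upperbound}, match the head term against the lower bound~\eqref{eq:left_lowerbound} for $\Topt$ (using $\kstar < m-1$), and absorb the additive $4n^2m^2$ into the $o(1)$ via~\eqref{eq:right_lowerbound}. Your closing remark on the regime where $n$ stays bounded is a reasonable extra caution but does not change the argument.
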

\begin{proof}
    By the upper bounds~\eqref{eq:left_upperbound} for the left part and~\eqref{eq:right_upperbound} for the right part, we have 
    \[
        \losstotal{\Tminmin} 
        = \sum_{k=1}^{\kstar} \losssub{\Tminmin}{k} + \sum_{k=\kstar+1}^{n+m-2} \losssub{\Tminmin}{k} 
        \le \frac{9}{8} \sum_{k=1}^{\kstar} \frac{|\Vdist{\ge k}|^2}{|\Vdist{k}|} + 4n^2m^2.
    \]
    Since the lower bounds~\eqref{eq:left_lowerbound} and \eqref{eq:right_lowerbound} hold also for the optimal spanning tree $\Topt$, we then have 
    \[
        \losstotal{\Tminmin} 
        \le \frac{9}{8} \losstotal{\Topt} + \frac{4}{\ln n - 2} \losstotal{\Topt} = \left( \frac{9}{8} + o(1) \right) \losstotal{\Topt}, 
    \]
    as claimed. 
\end{proof}

\section{Concluding Remark}

The most tantalizing open problem is to determine the shape of an optimal spanning tree on grids with uniform demand, uniform resistance and the root located at a corner.
We remind that the \minmin algorithm does not give an optimal solution in general.
See \figurename~\ref{fig:grid7x7}.

\bibliographystyle{acm}
\bibliography{electrical-flows}

\end{document}